\newtheorem{theorem}{Theorem}
\newtheorem{assumption}{Assumption}
\newtheorem{condition}{Condition}
\newtheorem{definition}{Definition}
\newtheorem{lemma}{Lemma}
\newtheorem{proposition}{Proposition}
\newcommand{\bs}{\begin{align}\begin{split}\nonumber}
\newcommand{\bsnumber}{\begin{align}\begin{split}}
\newcommand{\es}{\end{split}\end{align}}
\renewcommand{\qed}{\hfill \ensuremath{\blacksquare}}
\renewcommand{\cite}{\citeasnoun}
\newcommand{\ba}{\begin{eqnarray*}}
\newcommand{\ea}{\end{eqnarray*}}
\newcommand{\ban}{\begin{eqnarray*}}
\newcommand{\ean}{\end{eqnarray*}}
\numberwithin{equation}{section}
\providecommand{\BOXEDSPECIAL}[4]{\hbox to #2{\raise #3\hbox to #2{\null #1\hfil}}}
\chardef\@x10\chardef\@xv60
\def\tcitime{
\def\@time{%
  \@minute\time\@hour\@minute\divide\@hour\@xv
  \ifnum\@hour<\@x 0\fi\the\@hour:%
  \multiply\@hour\@xv\advance\@minute-\@hour
  \ifnum\@minute<\@x 0\fi\the\@minute
  }}%
\def\QCTOpt[#1]#2{%
  \def\QCTOptB{#1}
  \def\QCTOptA{#2}
}
\def\QCTNOpt#1{%
  \def\QCTOptA{#1}
  \let\QCTOptB\empty
}
\def\Qct{%
  \@ifnextchar[{%
    \QCTOpt}{\QCTNOpt}
}
\def\QCBOpt[#1]#2{%
  \def\QCBOptB{#1}
  \def\QCBOptA{#2}
}
\def\QCBNOpt#1{%
  \def\QCBOptA{#1}
  \let\QCBOptB\empty
}
\def\Qcb{%
  \@ifnextchar[{%
    \QCBOpt}{\QCBNOpt}
}
\def\PrepCapArgs{%
  \ifx\QCBOptA\empty
    \ifx\QCTOptA\empty
      {}%
    \else
      \ifx\QCTOptB\empty
        {\QCTOptA}%
      \else
        [\QCTOptB]{\QCTOptA}%
      \fi
    \fi
  \else
    \ifx\QCBOptA\empty
      {}%
    \else
      \ifx\QCBOptB\empty
        {\QCBOptA}%
      \else
        [\QCBOptB]{\QCBOptA}%
      \fi
    \fi
  \fi
}
\def\GRAPHICSPS#1{%
 \ifcase\GRAPHICSTYPE
   \special{ps: #1}%
 \or
   \special{language "PS", include "#1"}%
 \fi
}%
\def\graffile#1#2#3#4#5{%
    \bgroup
    \leavevmode
    \@ifundefined{bbl@deactivate}{\def~{\string~}}{\activesoff}
    \raise -#4 \BOXTHEFRAME{%
       \BOXEDSPECIAL{#1}{#2}{#3}{#5}}%
    \egroup
}%
\def\draftbox#1#2#3#4{%
 \leavevmode\raise -#4 \hbox{%
  \frame{\rlap{\protect\tiny #1}\hbox to #2%
   {\vrule height#3 width\z@ depth\z@\hfil}%
  }%
 }%
}%
\newif\ifwasdraft
\def\GRAPHIC#1#2#3#4#5{%
 \ifnum\draft=\@ne\draftbox{#2}{#3}{#4}{#5}%
  \else\graffile{#1}{#3}{#4}{#5}{#2}%
  \fi
 }%
\def\addtoLaTeXparams#1{%
    \edef\LaTeXparams{\LaTeXparams #1}}%
\newif\ifBoxFrame \BoxFramefalse
\newif\ifOverFrame \OverFramefalse
\newif\ifUnderFrame \UnderFramefalse
\def\BOXTHEFRAME#1{%
   \hbox{%
      \ifBoxFrame
         \frame{#1}%
      \else
         {#1}%
      \fi
   }%
}
\def\doFRAMEparams#1{\BoxFramefalse\OverFramefalse\UnderFramefalse\readFRAMEparams#1\end}%
\def\readFRAMEparams#1{%
 \ifx#1\end%
  \let\next=\relax
  \else
  \ifx#1i\dispkind=\z@\fi
  \ifx#1d\dispkind=\@ne\fi
  \ifx#1f\dispkind=\tw@\fi
  \ifx#1t\addtoLaTeXparams{t}\fi
  \ifx#1b\addtoLaTeXparams{b}\fi
  \ifx#1p\addtoLaTeXparams{p}\fi
  \ifx#1h\addtoLaTeXparams{h}\fi
  \ifx#1X\BoxFrametrue\fi
  \ifx#1O\OverFrametrue\fi
  \ifx#1U\UnderFrametrue\fi
  \ifx#1w
    \ifnum\draft=1\wasdrafttrue\else\wasdraftfalse\fi
    \draft=\@ne
  \fi
  \let\next=\readFRAMEparams
  \fi
 \next
 }%
\def\IFRAME#1#2#3#4#5#6{%
      \bgroup
      \let\QCTOptA\empty
      \let\QCTOptB\empty
      \let\QCBOptA\empty
      \let\QCBOptB\empty
      #6%
      \parindent=0pt%
      \leftskip=0pt
      \rightskip=0pt
      \setbox0 = \hbox{\QCBOptA}%
      \@tempdima = #1\relax
      \ifOverFrame
          \typeout{This is not implemented yet}%
          \show\HELP
      \else
         \ifdim\wd0>\@tempdima
            \advance\@tempdima by \@tempdima
            \ifdim\wd0 >\@tempdima
               \textwidth=\@tempdima
               \setbox1 =\vbox{%
                  \noindent\hbox to \@tempdima{\hfill\GRAPHIC{#5}{#4}{#1}{#2}{#3}\hfill}\\%
                  \noindent\hbox to \@tempdima{\parbox[b]{\@tempdima}{\QCBOptA}}%
               }%
               \wd1=\@tempdima
            \else
               \textwidth=\wd0
               \setbox1 =\vbox{%
                 \noindent\hbox to \wd0{\hfill\GRAPHIC{#5}{#4}{#1}{#2}{#3}\hfill}\\%
                 \noindent\hbox{\QCBOptA}%
               }%
               \wd1=\wd0
            \fi
         \else
            \ifdim\wd0>0pt
              \hsize=\@tempdima
              \setbox1 =\vbox{%
                \unskip\GRAPHIC{#5}{#4}{#1}{#2}{0pt}%
                \break
                \unskip\hbox to \@tempdima{\hfill \QCBOptA\hfill}%
              }%
              \wd1=\@tempdima
           \else
              \hsize=\@tempdima
              \setbox1 =\vbox{%
                \unskip\GRAPHIC{#5}{#4}{#1}{#2}{0pt}%
              }%
              \wd1=\@tempdima
           \fi
         \fi
         \@tempdimb=\ht1
         \advance\@tempdimb by \dp1
         \advance\@tempdimb by -#2%
         \advance\@tempdimb by #3%
         \leavevmode
         \raise -\@tempdimb \hbox{\box1}%
      \fi
      \egroup%
}%
\def\DFRAME#1#2#3#4#5{%
 \begin{center}
     \let\QCTOptA\empty
     \let\QCTOptB\empty
     \let\QCBOptA\empty
     \let\QCBOptB\empty
     \ifOverFrame 
        #5\QCTOptA\par
     \fi
     \GRAPHIC{#4}{#3}{#1}{#2}{\z@}
     \ifUnderFrame 
        \nobreak\par\nobreak#5\QCBOptA
     \fi
 \end{center}%
 }%
\def\FFRAME#1#2#3#4#5#6#7{%
 \begin{figure}[#1]%
  \let\QCTOptA\empty
  \let\QCTOptB\empty
  \let\QCBOptA\empty
  \let\QCBOptB\empty
  \ifOverFrame
    #4
    \ifx\QCTOptA\empty
    \else
      \ifx\QCTOptB\empty
        \caption{\QCTOptA}%
      \else
        \caption[\QCTOptB]{\QCTOptA}%
      \fi
    \fi
    \ifUnderFrame\else
      \label{#5}%
    \fi
  \else
    \UnderFrametrue%
  \fi
  \begin{center}\GRAPHIC{#7}{#6}{#2}{#3}{\z@}\end{center}%
  \ifUnderFrame
    #4
    \ifx\QCBOptA\empty
      \caption{}%
    \else
      \ifx\QCBOptB\empty
        \caption{\QCBOptA}%
      \else
        \caption[\QCBOptB]{\QCBOptA}%
      \fi
    \fi
    \label{#5}%
  \fi
  \end{figure}%
 }%
\def\makeactives{
  \catcode`\"=\active
  \catcode`\;=\active
  \catcode`\:=\active
  \catcode`\'=\active
  \catcode`\~=\active
}
   \gdef\activesoff{%
      \def"{\string"}
      \def;{\string;}
      \def:{\string:}
      \def'{\string'}
    }
\def\FRAME#1#2#3#4#5#6#7#8{%
 \bgroup
 \ifnum\draft=\@ne
   \wasdrafttrue
 \else
   \wasdraftfalse%
 \fi
 \def\LaTeXparams{}%
 \dispkind=\z@
 \def\LaTeXparams{}%
 \doFRAMEparams{#1}%
 \ifnum\dispkind=\z@\IFRAME{#2}{#3}{#4}{#7}{#8}{#5}\else
  \ifnum\dispkind=\@ne\DFRAME{#2}{#3}{#7}{#8}{#5}\else
   \ifnum\dispkind=\tw@
    \edef\@tempa{\noexpand\FFRAME{\LaTeXparams}}%
    \@tempa{#2}{#3}{#5}{#6}{#7}{#8}%
    \fi
   \fi
  \fi
  \ifwasdraft\draft=1\else\draft=0\fi{}%
  \egroup
 }%
\def\TEXUX#1{"texux"}
\def\func#1{\mathop{\rm #1}\nolimits}%
\long\def\QQQ#1#2{%
     \long\expandafter\def\csname#1\endcsname{#2}}%
\long\def\QQA#1#2{}%
\newcommand{\QTR}[2]{\csname text#1\endcsname{#2}}
\def\EXPAND#1[#2]#3{}%
\def\NOEXPAND#1[#2]#3{}%
\def\LaTeXparent#1{}%
\def\ChildStyles#1{}%
\def\ChildDefaults#1{}%
\def\QTagDef#1#2#3{}%
  \providecommand{\UNICODE}[2][]{}
\def\QQfnmark#1{\footnotemark}
 \def\abstract{%
  \if@twocolumn
   \section*{Abstract (Not appropriate in this style!)}%
   \else \small 
   \begin{center}{\bf Abstract\vspace{-.5em}\vspace{\z@}}\end{center}%
   \quotation 
   \fi
  }%
   \def\registered{\relax\ifmmode{}\r@gistered
                    \else$\m@th\r@gistered$\fi}%
 \def\r@gistered{^{\ooalign
  {\hfil\raise.07ex\hbox{$\scriptstyle\rm\text{R}$}\hfil\crcr
  \mathhexbox20D}}}}{}%
\newdimen\theight
\def\Column{%
 \vadjust{\setbox\z@=\hbox{\scriptsize\quad\quad tcol}%
  \theight=\ht\z@\advance\theight by \dp\z@\advance\theight by \lineskip
  \kern -\theight \vbox to \theight{%
   \rightline{\rlap{\box\z@}}%
   \vss
   }%
  }%
 }%
\def\qed{%
 \ifhmode\unskip\nobreak\fi\ifmmode\ifinner\else\hskip5\p@\fi\fi
 \hbox{\hskip5\p@\vrule width4\p@ height6\p@ depth1.5\p@\hskip\p@}%
 }%
\def\miss{\hbox{\vrule height2\p@ width 2\p@ depth\z@}}%
\def\tcol#1{{\baselineskip=6\p@ \vcenter{#1}} \Column}  %
\def\newfmtname{LaTeX2e}
  \DeclareOldFontCommand{\rm}{\normalfont\rmfamily}{\mathrm}
  \DeclareOldFontCommand{\sf}{\normalfont\sffamily}{\mathsf}
  \DeclareOldFontCommand{\tt}{\normalfont\ttfamily}{\mathtt}
  \DeclareOldFontCommand{\bf}{\normalfont\bfseries}{\mathbf}
  \DeclareOldFontCommand{\it}{\normalfont\itshape}{\mathit}
  \DeclareOldFontCommand{\sl}{\normalfont\slshape}{\@nomath\sl}
  \DeclareOldFontCommand{\sc}{\normalfont\scshape}{\@nomath\sc}
  \newcounter{equationnumber}  
  \def\mathletters{%
     \addtocounter{equation}{1}
     \edef\@currentlabel{\theequation}%
     \setcounter{equationnumber}{\c@equation}
     \setcounter{equation}{0}%
     \edef\theequation{\@currentlabel\noexpand\alph{equation}}%
  }
    \def\BibTeX{{\rm B\kern-.05em{\sc i\kern-.025em b}\kern-.08em
                 T\kern-.1667em\lower.7ex\hbox{E}\kern-.125emX}}}{}%
\def\AmS{{\protect\usefont{OMS}{cmsy}{m}{n}%
                A\kern-.1667em\lower.5ex\hbox{M}\kern-.125emS}}}{}%
\def\@@eqncr{\let\@tempa\relax
    \ifcase\@eqcnt \def\@tempa{& & &}\or \def\@tempa{& &}%
      \else \def\@tempa{&}\fi
     \@tempa
     \if@eqnsw
        \iftag@
           \@taggnum
        \else
           \@eqnnum\stepcounter{equation}%
        \fi
     \fi
     \global\tag@false
     \global\@eqnswtrue
     \global\@eqcnt\z@\cr}
\def\TCItag{\@ifnextchar*{\@TCItagstar}{\@TCItag}}
\def\@TCItag#1{%
    \global\tag@true
    \global\def\@taggnum{(#1)}}
\def\@TCItagstar*#1{%
    \global\tag@true
    \global\def\@taggnum{#1}}
\def\dsum{\mathop{\displaystyle \sum }}%
\begin{document}

\title{{\Large Breaking the curse of dimensionality in conditional moment
inequalities for discrete choice models}\thanks{{\footnotesize We are
grateful to the editor and two anonymous referees for constructive comments
and suggestions. We also thank Hidehiko Ichimura, Kengo Kato, Shakeeb Khan,
Toru Kitagawa, Dennis Kristensen, Adam Rosen, Kyungchul Song and
participants at SNU Workshop on Advances in Microeconometrics and 2015
Econometric Society World Congress for helpful comments. This work was
supported in part by the Ministry of Science and Technology, Taiwan
(MOST105-2410-H-001-003-), Academia Sinica (AS-CDA-106-H01), the National
Research Foundation of Korea (NRF-2015S1A5A2A01014041), the European
Research Council (ERC-2014-CoG-646917-ROMIA), and the UK Economic and Social
Research Council (ESRC) through research grant (ES/P008909/1) to the CeMMAP.}%
}}
\author{Le-Yu Chen\thanks{%
E-mail: lychen@econ.sinica.edu.tw} \\
{\small {Institute of Economics, Academia Sinica}} \and Sokbae Lee\thanks{%
E-mail: sl3841@columbia.edu} \\
{\small {Department of Economics, Columbia University} }\\
{\small {Centre for Microdata Methods and Practice, Institute for Fiscal
Studies} }}
\date{21 November 2018}
\maketitle

\newpage

\begin{abstract}
This paper studies inference of preference parameters in semiparametric
discrete choice models when these parameters are not point-identified and
the identified set is characterized by a class of conditional moment
inequalities. Exploring the semiparametric modeling restrictions, we show
that the identified set can be equivalently formulated by moment
inequalities conditional on only two continuous indexing variables. Such
formulation holds regardless of the covariate dimension, thereby breaking
the curse of dimensionality for nonparametric inference based on the
underlying conditional moment inequalities. We further apply this dimension
reducing characterization approach to the monotone single index model and to
a variety of semiparametric models under which the sign of conditional
expectation of a certain transformation of the outcome is the same as that
of the indexing variable. \newline

\noindent \textbf{Keywords:} \textit{partial identification, conditional
moment inequalities, discrete choice, monotone single index model, curse of
dimensionality} \newline

\noindent \textbf{JEL codes:} C14, C25.
\end{abstract}

\newpage

\section{Introduction\label{Introduction}}

There has been substantial research carried out on partial identification
since the seminal work of Manski. For example, see monographs by %
\citet{Manski2003, Manski2007}, a recent review by \citet{Tamer2010}, and
references therein for extensive details. In its general form,
identification results are typically expressed as nonparametric bounds via
moment inequalities or other similar population quantities. When these
unknown population quantities are high-dimensional (e.g. the dimension of
covariates is high in conditional moment inequalities), there is a curse of
dimensionality problem in that a very large sample is required to achieve
good precision in estimation and inference (see, e.g. %
\citet{Chernozhukov2013}). In this paper, we propose a method for inference
that avoids the curse of dimensionality by exploiting the model structure.
We illustrate our idea in the context of commonly used discrete choice
models.

To explain this issue, suppose that one is interested in identifying a
structural parameter in a binary choice model. In this model, it is quite
common to assume that an individual's utility function is parametric while
making weak assumptions regarding underlying unobserved heterogeneity.
Specifically, consider the following model 
\begin{equation}
Y=1\{X^{\prime }\beta \geq \varepsilon \},  \label{RUM}
\end{equation}%
where $Y$ is the binary outcome, $X$ is an observed $d$ dimensional
covariate vector, $\varepsilon $ is an unobserved random variable, $\beta
\in \Gamma $ is a vector of unknown true parameters, and $\Gamma \subset 
\mathbb{R}^{d}$ is the parameter space for $\beta $.

Without sufficient exogenous variation from covariates, $\beta $ is only
partially identified. The resulting identification region is characterized
by expressions involving nonparametric choice probabilities conditional on
covariates. For example, under the assumption that the conditional median of 
$\varepsilon $ is independent of $X$ and other regularity conditions that
will be given in Section \ref{Model}, $\beta $ is partially identified by 
\begin{equation}
\Theta =\{b\in \Gamma :X^{\prime }b\left[ P(Y=1|X)-0.5\right] \geq 0\text{
almost surely}\}.  \label{theta-median}
\end{equation}%
Recently, \citet{komarova2013} and \citet{Blevins2015} use this type of
characterization to partially identify $\beta $. Both papers consider
estimation and inference of the identified set $\Theta $ using a maximum
score objective function; however, they do not develop inference methods for
the parameter value $\beta $ based on the conditional moment inequalities in %
\eqref{theta-median}. Unlike theirs, we focus on inference for $\beta $ as
well as the issue of dimension reduction in the context of conditional
moment inequalities.

When $X$ contains several continuous covariates yet their support is not
rich enough to ensure point identification, we can, for instance, construct
a confidence region for $\beta $ by inverting the test of %
\citet[][henceforth CLR]{Chernozhukov2013}, who plug in nonparametric
(kernel or series based) estimators to form one-sided Kolmogorov-Smirnov
type statistic for testing the conditional moment inequalities. In order to
conduct inference based on the CLR method, we need to estimate conditional
expectation $E(Y|X)=P(Y=1|X)$ nonparametrically. In this context, it is
difficult to carry out inference in a fully nonparametric fashion when $d$
is large. One may attempt to use parametric models to fit the choice
probabilities. However, that can lead to misspecification which may
invalidate the whole partial identification approach. Hence, it is important
to develop dimension reduction methods that avoid misspecification but
improve the precision of inference, compared to fully nonparametric methods.

In this paper, we establish an alternative characterization of $\Theta $
that is free from the curse of dimensionality. One of the main results of
this paper (Lemma \ref{Lemma 1} in Section \ref{Model}) is that $\Theta =%
\widetilde{\Theta }$, where 
\begin{equation}
\widetilde{\Theta }\equiv \{b\in \Gamma :X^{\prime }b\left[ P(Y=1|X^{\prime
}b,X^{\prime }\gamma )-0.5\right] \geq 0\text{ almost surely for all }\gamma
\in \Gamma \}.  \label{theta-tilde-median}
\end{equation}%
This characterization of the identified set $\Theta $ enables us to break
the curse of dimensionality since we now need to deal with the choice
probability conditional on only two indexing variables. The benefit of using
the characterization in $\widetilde{\Theta }$, as opposed to $\Theta $, is
most clear when we estimate the conditional expectation functions directly.
The local power of a Kolmogorov-Smirnov type test decreases as the dimension
of conditional variables gets large (for example, see CLR and %
\citet{armstrong2014, armstrong2015, armstrong2016choice}). If the method of
CLR is utilized with \eqref{theta-median}, the dimension of nonparametric
smoothing is $d$. Whereas, if the same method is combined with %
\eqref{theta-tilde-median}, note that the dimension of nonparametric
smoothing is always 2. This is true even if $d$ is large. Therefore, the
latter method is free from the curse of dimensionality.

The remainder of the paper is organized as follows. In Section \ref{Model},
we provide a formal statement about the binary choice model \eqref{RUM}. In
Section \ref{Extensions}, we show that our approach can be extended to the
class of semiparametric models under which the sign of conditional
expectation of a certain transformation of the outcome is the same as that
of the indexing variable. This extension covers a variety of discrete choice
models in the literature. Section \ref{sec:inference-CLR} describes how to
construct a confidence set based on CLR and Section \ref{sec:MC} presents
some results of Monte Carlo simulation experiments that illustrate
finite-sample advantage of using the dimension reducing approach. In Section %
\ref{Monotone single index model}, we discuss how to apply our dimension
reducing approach to the monotone single index model, which admits related
yet different sign restrictions from those studied in Section \ref%
{Extensions}. We conclude the paper in Section \ref{sec:conclusions}. Proofs
and further results are collated in Appendix \ref{sec:appendix}.

\section{Conditional moment inequalities for a binary choice model\label%
{Model}}

To convey the main idea of this paper in a simple form, we start with a
binary choice model. Recall that in the binary choice model \eqref{RUM}, we
have that $Y=1\{X^{\prime }\beta \geq \varepsilon \}$, where the
distribution of $\varepsilon $ conditional on $X$ is unknown. Let $\Gamma
_{X}$ denote the support of $X$. Write $X=(X_{1},\widetilde{X})$ where $%
\widetilde{X}$ is the subvector of $X$ excluding its first element. Let $%
\Gamma $ be the parameter space that contains the true parameter vector
value $\beta $. Let $b$ denote a generic element of $\Gamma $. Let $Q_{\tau
}(U|V)$ denote the $\tau $ quantile of the distribution of a random variable 
$U$ conditional on a random vector $V$. We study inference of the model
under the following assumptions.

\begin{condition}
\label{continuous covariate}(i) $\left\vert b_{1}\right\vert =1$ for all $%
b\in \Gamma $. (ii) The distribution of $X_{1}$ conditional on $\widetilde{X}%
=\widetilde{x}$ is absolutely continuous with respect to Lebesgue measure
for almost every realization $\widetilde{x}$.
\end{condition}

\begin{condition}
\label{condition 1}(i) For some $\tau \in \left( 0,1\right) $ and for all $%
x\in \Gamma _{X}$, $Q_{\tau }(\varepsilon |X=x)=0$. (ii) For all $x\in
\Gamma _{X}$, there is an open interval containing zero such that the
distribution of $\varepsilon $ conditional on $X=x$ has a Lebesgue density
that is everywhere positive on this interval.
\end{condition}

The event $X^{\prime }\beta \geq \varepsilon $ determining the choice is
invariant with respect to an arbitrary positive scalar multiplying both
sides of the inequality. Therefore, the scale of $\beta$ is not identified;
following the literature (e.g., \citet{horowitz1992}), we assume Condition %
\ref{continuous covariate} (i) for scale normalization. Condition \ref%
{continuous covariate} (i) and (ii) together imply that the model admits at
least one continuous covariate. Condition \ref{condition 1} (i), due to %
\citet{Manski1985, Manski1988}, is a quantile independence assumption and
allows for nonparametric specification of the preference shock with a
general form of heteroskedasticity. Condition \ref{condition 1} (ii) implies
that, for all $x\in \Gamma _{X}$, $P(\varepsilon \leq t|X=x)$ is strictly
increasing in $t$ around the neighborhood of the point $t=0$. This is a
fairly weak restriction which is not confined to the case where the
distribution of $\varepsilon $ conditional on $X$ has a Lebesgue density
that is everywhere positive on $\mathbb{R}$.

Under Condition \ref{condition 1}, \citet[][Proposition 2]{Manski1988}
established that the necessary and sufficient condition for point
identification of $\beta $ is\thinspace that, for $b\neq \beta $, 
\begin{equation}
P\left( X^{\prime }b<0\leq X^{\prime }\beta \text{ or }X^{\prime }b\geq
0>X^{\prime }\beta \right) >0.  \label{identification condition}
\end{equation}%
Given the scale normalizing assumption, the condition (\ref{identification
condition}) effectively requires that the covariates $X$ should be observed
with sufficient variation. Hence, lack of adequate support of the
distribution of $X$ may result in non-identification of $\beta $. For
example, \citet{Manski1988} and \citet[][Section 3.2.2]{horowitz1998}
constructed non-identification cases for which all covariates take discrete
values. Admitting continuous covariates does not guarantee identification
either. As indicated by \citet[][Lemma 1]{Manski1985}, non-identification
also arises when the covariates are distributed over a bounded support such
that one of the choices is observed with probability well below $\tau $ for
almost every realized value of $X$. In empirical applications of the
discrete choice model, it is quite common to include continuous variables in
the covariate specification. Therefore, the present paper addresses and
develops the method for inference of $\beta $ in the presence of continuous
covariates for the model where the support of data may not be rich enough to
fulfill the point-identifying condition (\ref{identification condition}).

Though Conditions \ref{continuous covariate} and \ref{condition 1} do not
suffice for point identification of $\beta $, it still induces restrictions
on possible values of preference parameters, which results in set
identification of $\beta $. To see this, note that Condition \ref{condition
1} implies that for all $x\in \Gamma _{X}$, 
\begin{eqnarray}
P(Y &=&1|X=x)>\tau \Leftrightarrow x^{\prime }\beta >0,  \label{a1} \\
P(Y &=&1|X=x)=\tau \Leftrightarrow x^{\prime }\beta =0,  \label{a2} \\
P(Y &=&1|X=x)<\tau \Leftrightarrow x^{\prime }\beta <0.  \label{a3}
\end{eqnarray}%
Given Condition \ref{continuous covariate}, $X^{\prime }b$ is continuous for
any $b\in \Gamma $. Thus $P(Y=1|X)=\tau $ occurs with zero probability. The
set of observationally equivalent preference parameter values that conform
with Condition \ref{condition 1} can hence be characterized by%
\begin{equation}
\Theta =\{b\in \Gamma :X^{\prime }b\left[ P(Y=1|X)-\tau \right] \geq 0\text{
almost surely}\}.  \label{theta}
\end{equation}%
Given (\ref{a1}), (\ref{a2}) and (\ref{a3}), we also have that 
\begin{equation}
\Theta =\{b\in \Gamma :b^{\prime }XX^{\prime }\beta \geq 0\text{ almost
surely}\}\text{.}  \label{theta_sign}
\end{equation}%
Namely, the vector $b$ is observationally equivalent to $\beta $ if and only
if the indexing variables $X^{\prime }b$ and $X^{\prime }\beta $ are of the
same sign almost surely.

Operationally, one could make inference on $\beta $ by pointwise inverting a
test of the conditional moment inequalities given in (\ref{theta}). However,
as discussed in Section \ref{Introduction}, there is the curse of
dimensionality in nonparametric inference of the conditional expectation
when the dimension of continuous covariates is high. By exploiting the
restrictions implies by Conditions \ref{continuous covariate} and \ref%
{condition 1}, we now present below a novel set of conditional moment
inequalities that can equivalently characterize the set $\Theta $ yet enable
inference to be performed free from the curse of dimensionality.

Note that the restrictions (\ref{a1}), (\ref{a2}) and (\ref{a3}) imply that 
\begin{equation*}
Q_{1-\tau }(Y|X)=1\{X^{\prime }\beta >0\}=Q_{1-\tau }(Y|X^{\prime }\beta )%
\text{ almost surely.}
\end{equation*}%
In other words, we have that with probability 1,%
\begin{equation}
\func{sgn}[P(Y=1|X)-\tau ]=\func{sgn}[P(Y=1|X^{\prime }\beta )-\tau ]=\func{%
sgn}[X^{\prime }\beta ],  \label{sign}
\end{equation}%
where $\func{sgn}(\cdot )$ is the sign function such that $\func{sgn}(u)=1$
if $u>0$; $\func{sgn}(u)=0$ if $u=0$; $\func{sgn}(u)=-1$ if $u<0$. The sign
equivalence (\ref{sign}) motivates use of indexing variables instead of the
full set of covariates as the conditioning variables in nonparametric
estimation of the conditional expectation, thereby breaking the curse of
dimensionality as raised in the discussion above. To be precise, let 
\begin{equation*}
\widetilde{\Theta }\equiv \{b\in \Gamma :X^{\prime }b\left[ P(Y=1|X^{\prime
}b,X^{\prime }\gamma )-\tau \right] \geq 0\text{ almost surely for all }%
\gamma \in \Gamma \}.
\end{equation*}%
The first key result of our approach is the following lemma showing that the
identified set $\Theta $ can be equivalently characterized by $\widetilde{%
\Theta }$, which is based on the choice probabilities conditional on two
indexing variables.

\begin{lemma}
\label{Lemma 1}Under Conditions \ref{continuous covariate} and \ref%
{condition 1}, we have that $\Theta =\widetilde{\Theta }$.
\end{lemma}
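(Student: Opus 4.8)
The plan is to show the two sets are equal by proving both inclusions, $\Theta \subseteq \widetilde{\Theta}$ and $\widetilde{\Theta} \subseteq \Theta$. The key fact I would rely on is the sign equivalence \eqref{sign}, which says that under Conditions \ref{continuous covariate} and \ref{condition 1}, $\func{sgn}[P(Y=1|X)-\tau] = \func{sgn}[X'\beta]$ almost surely. The crucial observation is that the same chain of reasoning leading to \eqref{a1}--\eqref{a3} applies when we condition on any coarsening of $X$, provided the sign of $X'\beta$ is measurable with respect to that coarser information. Let me sketch each direction.

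For the inclusion $\widetilde{\Theta} \subseteq \Theta$, take $b \in \widetilde{\Theta}$. Since the defining condition for $\widetilde{\Theta}$ must hold for \emph{all} $\gamma \in \Gamma$, I would specialize to the particular choice $\gamma = \beta$. This yields $X'b[P(Y=1|X'b, X'\beta)-\tau] \geq 0$ almost surely. The plan is then to relate $P(Y=1|X'b, X'\beta)$ back to the sign of $X'\beta$. By the law of iterated expectations and the fact from \eqref{a1}--\eqref{a3} that $P(Y=1|X) - \tau$ has the same sign as $X'\beta$ (equivalently $Y = 1\{X'\beta \geq \varepsilon\}$ with $Q_\tau(\varepsilon|X)=0$), conditioning further on the coarser sigma-field generated by $(X'b, X'\beta)$ should preserve the sign relationship with $X'\beta$: informally, $P(Y=1 | X'b, X'\beta) - \tau$ has the same sign as $X'\beta$. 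Substituting this back gives $X'b \cdot \func{sgn}(X'\beta) \cdot (\text{positive}) \geq 0$, which after using \eqref{theta_sign} recovers $b^{\prime}XX^{\prime}\beta \geq 0$ almost surely, i.e.\ $b \in \Theta$.

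For the reverse inclusion $\Theta \subseteq \widetilde{\Theta}$, take $b \in \Theta$, so by \eqref{theta_sign} we know $X'b$ and $X'\beta$ have the same sign almost surely. I must show that for \emph{every} $\gamma \in \Gamma$, $X'b[P(Y=1|X'b, X'\gamma)-\tau] \geq 0$ almost surely. Here the key step is to compute $P(Y=1|X'b, X'\gamma)-\tau$. Because $Y=1\{X'\beta \geq \varepsilon\}$ and $\func{sgn}(X'\beta) = \func{sgn}(X'b)$ almost surely, the conditional probability $P(Y=1|X'b, X'\gamma)$ can be analyzed by iterated expectations, conditioning first on the full $X$: on the event $\{X'b > 0\}$ we have $X'\beta > 0$ so $P(Y=1|X) > \tau$, and this inequality is preserved (in a weak sense, after averaging) when we condition on the coarser information $(X'b, X'\gamma)$ on that event. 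The product $X'b \cdot [P(Y=1|X'b,X'\gamma)-\tau]$ is then nonnegative on each of the sign events, which (together with the probability-zero event $\{X'b=0\}$ guaranteed by Condition \ref{continuous covariate}) gives the claim.

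The main obstacle I anticipate is the careful handling of the conditional-expectation sign-preservation argument when coarsening the conditioning sigma-field. The delicate point is that averaging over $X$ within a level set of $(X'b, X'\gamma)$ could in principle mix observations where $X'\beta$ takes different signs, which would threaten the clean sign relationship. The resolution must exploit that on $\{X'b > 0\}$ we have $X'\beta > 0$ \emph{almost surely} (and symmetrically for the negative event), so the event $\{X'b > 0\}$ is, up to a null set, contained in $\{X'\beta > 0\}$; hence conditioning on $(X'b, X'\gamma)$ never averages across a sign change of $X'\beta$ on these events. Making this measurability-and-nullset bookkeeping rigorous---in particular verifying that the relevant events differ only by sets of probability zero and that $P(X'b=0)=0$ removes the boundary case---is where the real work lies, and it is what the formal proof will need to spell out precisely.
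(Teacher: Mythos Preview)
Your proposal is correct and follows essentially the same argument as the paper. The paper in fact proves the more general Theorem \ref{dimension reducing characterization} (for arbitrary $G$ and $H$ satisfying the sign restrictions) and then obtains Lemma \ref{Lemma 1} as the special case $G(X,c,b)=X'b$, $H(Y,c)=Y-\tau$; but the proof of that theorem proceeds exactly as you outline---showing $\Theta\subseteq\widetilde{\Theta}$ via the law of iterated expectations \eqref{e1} combined with the sign equivalence, and showing $\widetilde{\Theta}\subseteq\Theta$ by specializing $\gamma=\beta$ and using that $X'\beta$ is measurable with respect to $\sigma(X'b,X'\beta)$ so the sign of $P(Y=1\mid X'b,X'\beta)-\tau$ agrees with that of $X'\beta$.
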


To explain the characterization result of Lemma \ref{Lemma 1}, note that the
model (\ref{RUM}) under Condition \ref{condition 1} implies that for any $%
\gamma \in \Gamma $, 
\begin{equation}
\func{sgn}[P(Y=1|X^{\prime }\beta ,X^{\prime }\gamma )-\tau ]=\func{sgn}%
[X^{\prime }\beta ]\text{ almost surely.}  \label{d}
\end{equation}%
Thus, intuitively speaking, for any $b$ that is observationally equivalent
to $\beta $, equation (\ref{d}) should also hold for $b$ in place of $\beta $
in the statement. Define%
\begin{eqnarray}
\underline{\Theta } &\equiv &\{b\in \Gamma :X^{\prime }b\left[
P(Y=1|X^{\prime }\gamma )-\tau \right] \geq 0\text{ almost surely for all }%
\gamma \in \Gamma \},  \label{theta_lower_bar} \\
\overline{\Theta } &\equiv &\{b\in \Gamma :X^{\prime }b\left[
P(Y=1|X^{\prime }b)-\tau \right] \geq 0\text{ almost surely}\}.
\label{theta_upper_bar}
\end{eqnarray}%
In contrast with the set $\widetilde{\Theta }$, the sets $\underline{\Theta }
$ and $\overline{\Theta }$ are based on moment inequalities conditional on a
single indexing variable. The next lemma, which extends the result of Lemma %
\ref{Lemma 1}, establishes the relation between the sets $\Theta ,$ $%
\widetilde{\Theta },$ $\underline{\Theta }$ and $\overline{\Theta }$.

\begin{lemma}
\label{Lemma 2}Under Conditions \ref{continuous covariate} and \ref%
{condition 1}, we have that%
\begin{equation}
\underline{\Theta }\subset \Theta =\widetilde{\Theta }\subset \overline{%
\Theta }.  \label{e}
\end{equation}
\end{lemma}

It is interesting to note that the set inclusion in (\ref{e}) can be strict
as demonstrated in the examples of Appendix \ref{Appendix A}. Namely, the
set $\underline{\Theta }$ is too restrictive and a test of the inequalities
given by (\ref{theta_lower_bar}) may inadequately reject the true parameter
value $\beta $ with probability approaching unity. Moreover, the set $%
\overline{\Theta }$ is not sharp and thus a test of inequalities given by (%
\ref{theta_upper_bar}) would not be consistent against some $b$ values that
are incompatible with the inequality restrictions given by (\ref{theta}).

The identifying relationship in (\ref{e}) can be viewed as a conditional
moment inequality analog of well-known index restrictions in semiparametric
binary response models (e.g., \citet{Cosslett1983}, \citet{powell1989}, %
\citet{han1987}, \citet{ichimura1993}, \citet{klein1993}, %
\citet{coppejans2001}). The main difference between our setup and those
models\ is that we allow for partial identification as well as a general
form of heteroskedasticity. It is also noted that to ensure equivalent
characterization of the set $\Theta $, we need two indices unlike ones in
the point-identified cases.

\section{General results for a class of semiparametric models under sign
restrictions \label{Extensions}}

In this section, we extend the dimension reducing characterization approach
of the previous section to a variety of semiparametric discrete choice
models under which the sign of conditional expectation of a certain
transformation of the outcome is the same as that of the indexing variable.
We treat univariate and multivariate outcome models in a unified abstract
setting given as follows.

Let $\left( Y,X\right) $ be the data vector of an individual observation
where $Y$ is a vector of outcomes and $X$ is a vector of covariates. The
econometric model specifying the distribution of $Y$ conditional on $X$
depends on a finite dimensional parameter vector $\beta $ and is
characterized by the following sign restrictions.

\begin{assumption}
\label{same-sign assumption}For some set $C$ and some known functions $G$
and $H$, and for all $c\in C$, the following statements hold with
probability 1. That is, with probability 1, 
\begin{eqnarray}
G(X,c,\beta ) &>&0\Longleftrightarrow E\left( H(Y,c)|X\right) >0,
\label{same_sign g} \\
G(X,c,\beta ) &=&0\Longleftrightarrow E\left( H(Y,c)|X\right) =0,
\label{same_sign e} \\
G(X,c,\beta ) &<&0\Longleftrightarrow E\left( H(Y,c)|X\right) <0.
\label{same_sign s}
\end{eqnarray}
\end{assumption}

Let $\beta $ be the true data generating parameter vector. Assume $\beta \in 
$ $\Gamma $ where $\Gamma $ denotes the parameter space. Let $b$ be a
generic element of $\Gamma $. Note that the functions $G$ and $H$ in
Assumption \ref{same-sign assumption} are determined by the specification of
the given model. For example, for the binary choice model of Section \ref%
{Model}, Assumption \ref{same-sign assumption} is fulfilled by taking $%
G(X,c,b)=X^{\prime }b$ and $H(Y,c)=Y-\tau $, both being independent of $c$.
Other examples satisfying Assumption \ref{same-sign assumption} are
presented below.

Define 
\begin{align*}
\Theta _{0}=\{b\in \Gamma :(\ref{same_sign g}),(\ref{same_sign e})\text{ and 
}(\ref{same_sign s})\text{ hold with }b\text{ in place of }\beta \text{
almost surely for all }c\in C\}.
\end{align*}%
Note that $\Theta _{0}$ consists of observationally equivalent parameter
values that conform with the sign restrictions of Assumption \ref{same-sign
assumption}. We impose the following continuity assumption.

\begin{assumption}
\label{continuity}For all $c\in C$ and for all $b\in \Gamma $, the event
that $G(X,c,b)=0$ occurs with zero probability.
\end{assumption}

Under Assumptions \ref{same-sign assumption} and \ref{continuity}, we can
reformulate the identified set $\Theta _{0}$ using weak conditional moment
inequalities given by the set 
\begin{equation}
\Theta \equiv \{b\in \Gamma :G(X,c,b)E\left( H(Y,c)|X\right) \geq 0\text{
almost surely for all }c\in C\}.  \label{theta general}
\end{equation}%
We now derive the equivalent characterization of the set $\Theta $ using
indexing variables. Define 
\begin{eqnarray*}
\widetilde{\Theta } &\equiv &\{b\in \Gamma
:G(X,c,b)E(H(Y,c)|G(X,c,b),G(X,c,\gamma ))\geq 0\;\text{almost surely for
all }\left( \gamma ,c\right) \in \Gamma \times C\}, \\
\underline{\Theta } &\equiv &\{b\in \Gamma :G(X,c,b)E(H(Y,c)|G(X,c,\gamma
))\geq 0\text{ almost surely for all }\left( \gamma ,c\right) \in \Gamma
\times C\}, \\
\overline{\Theta } &\equiv &\{b\in \Gamma :G(X,c,b)E(H(Y,c)|G(X,c,b))\geq 0%
\text{ almost surely for all }c\in C\}.
\end{eqnarray*}%
The following theorem generalizes the results of Lemmas \ref{Lemma 1} and %
\ref{Lemma 2}.

\begin{theorem}
\label{dimension reducing characterization}Given Assumptions \ref{same-sign
assumption} and \ref{continuity}, we have that 
\begin{equation}
\underline{\Theta }\subset \Theta _{0}=\Theta =\widetilde{\Theta }\subset 
\overline{\Theta }.  \label{result of Theorem 1}
\end{equation}
\end{theorem}

In the following subsections, we discuss examples of semiparametric models
that fit within the setting of sign restrictions of Assumption \ref%
{same-sign assumption}. In addition, the general framework in this section
can be applied to monotone transformation models (e.g., see %
\citet{abrevaya1999,abrevaya2000}, \citet{chen2010} and 
\citet[][Section
2]{pakes2016}). As in the binary choice model of Section \ref{Model}, the
index variables $G(X,c,b)$ in the four examples below are linear in
parameters. Therefore, Assumption \ref{continuity} implies that the
parameter space $\Gamma $ in these examples should exclude the point $b=0$.

\subsection*{Example 1: Ordered choice model under quantile independence
restriction}

Consider an ordered response model with $K+1$ choices. Let $\{1,...,K+1\}$
denote the choice index set. The agent chooses alternative $c$ if and only
if 
\begin{equation}
\lambda _{c-1}<X^{\prime }\theta +\varepsilon \leq \lambda _{c}
\end{equation}%
where $\lambda _{0}=-\infty <\lambda _{1}<....<\lambda _{K}<\lambda
_{K+1}=\infty $. Let $\lambda \equiv (\lambda _{1},...,\lambda _{K})$ be the
vector of unknown threshold parameters. We assume that $X$ does not contain
a constant component because the coefficient (intercept) associated with a
constant covariate cannot be separately identified from the threshold
parameters. Let $Y$ be the observed choice, which is given by%
\begin{equation}
Y=\dsum\limits_{c=1}^{K+1}c1\{\lambda _{c-1}<X^{\prime }\theta +\varepsilon
\leq \lambda _{c}\}.  \label{ordered choice}
\end{equation}%
We are interested in inference of $\beta \equiv (\theta ,\lambda )$. %
\citet{lee1992} and \citet{komarova2013} studied inference of the ordered
response model under quantile independence restriction. Assume the
distribution of $\varepsilon $ conditional on $X$ satisfies Condition \ref%
{condition 1}. Using this restriction, we see that Assumption \ref{same-sign
assumption} holds with $C=\{1,...,K\}$, $H(Y,c)=1\{Y\leq c\}-\tau $ and $%
G(X,c,\beta )=\widetilde{X}_{c}^{\prime }\beta $ where $\widetilde{X}%
_{c}\equiv (-X^{\prime },l_{c}^{\prime })^{\prime }$ with $l_{c}$ being the $%
K$ dimensional vector $\left( l_{c,1},...,l_{c,K}\right) $ such that $%
l_{c,j}=1$ if $j=c$ and $l_{c,j}=0$ otherwise.

\subsection*{Example 2: Multinomial choice model}

Consider a multinomial choice model with $K$ alternatives. Let $\{1,...,K\}$
denote the choice index set. The utility from choosing alternative $j$ is 
\begin{equation}
U{}_{j}=X_{j}{}^{\prime }\beta +\varepsilon {}_{j}
\end{equation}%
where $X_{j}\in \mathbb{R}^{q}$ is a vector of observed choicewise
covariates and $\varepsilon {}_{j}$ is a choicewise preference shock. {The{\
agent chooses alternative }}$k${\ if }$U{}_{k}>U{}_{j}$ for all $j\neq k$.
Let $X$ denote the vector $(X_{1},...,X_{K})$ and $Y$ denote the observed
choice. We assume that the unobservables $\varepsilon \equiv \left(
\varepsilon _{1},...,\varepsilon _{K}\right) $ should satisfy the following
rank ordering property.

\begin{condition}
\label{condition 3}For any pair $\left( s,t\right) $ of choices, we have
that with probability 1, 
\begin{equation}
X_{s}{}^{\prime }\beta >X_{t}^{\prime }\beta \text{ }\Longleftrightarrow 
\text{ }P(Y=s|X)>P(Y=t|X).  \label{rank ordering property}
\end{equation}
\end{condition}

\citet{Manski1975}, \citet{Matzkin1993} and \citet{Fox2007} used Condition %
\ref{condition 3} as an identifying restriction in the multinomial choice
model to allow for nonparametric unobservables with unknown form of
heteroskedasticity. \citet[][Proposition 5]{goeree2005} showed that it
suffices for Condition \ref{condition 3} to assume that the joint
distribution of $\varepsilon $ conditional on $X$ for almost every
realization of $X$ is exchangeable and has a joint density that is
everywhere positive on $\mathbb{R}^{K}$.

Under Condition \ref{condition 3}, Assumption \ref{same-sign assumption}
holds for this example by taking $C\equiv \{\left( s,t\right) \in
\{1,...,K\}^{2}:s<t\}$, $G(X,s,t,\beta )=(X_{s}-X_{t})^{\prime }\beta $ and $%
H(Y,s,t)=1\{Y=s\}-1\{Y=t\}$.

\subsection*{Example 3: Binary choice panel data with fixed effect}

Consider the following binary choice panel data model%
\begin{equation}
Y_{t}=1\{X_{t}^{\prime }\beta +v\geq \varepsilon _{t}\},\text{ }t\in
\{1,...,T\}  \label{panel binary choice}
\end{equation}%
where $X_{t}\in \mathbb{R}^{q}$ is a vector of per-period covariates and $v$
is an unobserved fixed effect. Let $X$ be the vector $(X_{1},...,X_{T})$.
Let $Y=(Y_{1},...,Y_{T})$ denote the vector of outcomes. \citet{Manski1987}
imposed the following restrictions on the transitory shocks $\varepsilon
_{t} $.

\begin{condition}
\label{condition 2}The distribution of $\varepsilon _{t}$ conditional on $%
(X,v)$ is time invariant and has a Lebesgue density that is everywhere
positive on $\mathbb{R}$ for almost every realization of $(X,v)$.
\end{condition}

Under Condition \ref{condition 2} and by Lemma 1 of \citet{Manski1987},
Assumption \ref{same-sign assumption} holds for this example by taking $%
C\equiv \{\left( s,t\right) \in \{1,...,T\}^{2}:s<t\}$, $G(X,s,t,\beta
)=(X_{s}-X_{t})^{\prime }\beta $ and $H(Y,s,t)=Y_{s}-Y_{t}$.

\subsection*{Example 4: Ordered choice panel data with fixed effect}

This example is concerned with the ordered choice model of Example 1 in the
panel data context. Let $\{1,...,K+1\}$ denote the choice index set. For
each period $t\in \{1,...,T\}$, we observe the agent's ordered response
outcome $Y_{t}$ that is generated by 
\begin{equation}
Y_{t}=\dsum\limits_{j=1}^{K+1}j1\{\lambda _{j-1}<X_{t}^{\prime }\beta
+v+\varepsilon _{t}\leq \lambda _{j}\},
\end{equation}%
where $v$ is an unobserved fixed effect and $\lambda _{0}=-\infty <\lambda
_{1}<....<\lambda _{K}<\lambda _{K+1}=\infty $. Let $X$ and $Y$ denote the
covariate vector $(X_{1},...,X_{T})$ and outcome vector $(Y_{1},...,Y_{T})$,
respectively. Suppose the shocks $\varepsilon _{t}$ also satisfy %
\citet{Manski1987}'s stationarity assumption given by Condition \ref%
{condition 2}. Under this restriction and by applying the law of iterated
expectations, we see that Assumption \ref{same-sign assumption} holds for
this example by taking $C=\{\left( k,s,t\right) :k\in \{1,...,K\}$, $\left(
s,t\right) \in \{1,...,T\}^{2}$ such that $s<t\}$, $G(X,k,s,t,\beta
)=(X_{t}-X_{s})^{\prime }\beta $ and $H(Y,k,s,t)=1\{Y_{s}\leq
k\}-1\{Y_{t}\leq k\}$.

\section{The $\left( 1-\protect\alpha \right) $ level confidence set\label%
{sec:inference-CLR}}

This section describes how to construct a confidence set for the true value $%
\beta $ based on the conditional moment inequalities that define the set $%
\widetilde{\Theta }$. Let $v\equiv (x,\gamma ,c)$ and $\mathcal{V}\equiv
\{(x,\gamma ,c):x\in \Gamma _{X},\gamma \in \Gamma ,c\in C\}$. Assume the
set $\mathcal{V}$ is nonempty and compact.\ Define 
\begin{eqnarray*}
m_{b}(v) &\equiv &E\left( G(X,c,b)H(Y,c)|G(X,c,b)=G(x,c,b),G(X,c,\gamma
)=G(x,c,\gamma )\right) \\
&&\times f_{b,c,\gamma }\left( G(x,c,b),G(x,c,\gamma )\right) ,
\end{eqnarray*}%
where the function $f_{b,c,\gamma }$ denotes the joint density function of
the indexing variables $\left( G(X,c,b),G(X,c,\gamma )\right) $. Under the
assumption that $f_{b,c,\gamma }\left( G(x,c,b),G(x,c,\gamma )\right) >0$,
note that for almost every $v\in \mathcal{V}$, 
\begin{align*}
& m_{b}(v)\geq 0 \\
& \Longleftrightarrow E\left( G(X,c,b)H(Y,c)|G(X,c,b)=G(x,c,b),G(X,c,\gamma
)=G(x,c,\gamma )\right) \geq 0.
\end{align*}%
Thus we have that%
\begin{equation}
\widetilde{\Theta }=\{b\in \Gamma :m_{b}(v)\geq 0\text{ for almost every }%
v\in \mathcal{V}\}.  \label{f}
\end{equation}%
Assume that we observe a random sample of individual outcomes and covariates 
$\left( Y_{i},X_{i}\right) _{i=1,...,n}$. For inference on the true
parameter value $\beta $, we aim to construct a set estimator $\widehat{%
\Theta }$ at the $\left( 1-\alpha \right) $ confidence level such that%
\begin{equation}
\liminf_{n\longrightarrow \infty }P(\beta \in \widehat{\Theta })\geq
1-\alpha \text{.}  \label{size-result}
\end{equation}

We now delineate an implementation of the set estimator $\widehat{\Theta }$
based on a kernel version of CLR. To estimate the function $m_{b}$, we
consider the following kernel type estimator: 
\begin{equation}
\widehat{m}_{b}(v)\equiv \left\{ nh_{n}(c,b)h_{n}(c,\gamma )\right\}
^{-1}\sum\limits_{i=1}^{n}G(X_{i},c,b)H(Y_{i},c)K_{n}(X_{i},v,b),  \label{m}
\end{equation}%
where%
\begin{equation}
K_{n}(X_{i},v,b)\equiv K\left( \frac{G(x,c,b)-G(X_{i},c,b)}{h_{n}(c,b)},%
\frac{G(x,c,\gamma )-G(X_{i},c,\gamma )}{h_{n}(c,\gamma )}\right) ,
\label{Kn(X,v,b)}
\end{equation}%
$K(\cdot ,\cdot )$ is a bivariate kernel function, and $h_{n}(c,\gamma )$ is
a sequence of bandwidths for each $(c,\gamma )$. Note that $h_{n}(c,\gamma )$
is a function of $(c,\gamma )$ and thus it can be different from $h_{n}(c,b)$%
. Define 
\begin{equation}
T(b)\equiv \inf\nolimits_{v\in \mathcal{V}}\frac{\widehat{m}_{b}(v)}{%
\widehat{\sigma }_{b}(v)},  \label{T(tau)}
\end{equation}%
where 
\begin{eqnarray}
\widehat{\sigma }_{b}^{2}(v) &\equiv &n^{-2}[h_{n}(c,b)]^{-2}[h_{n}(c,\gamma
)]^{-2}\sum\limits_{i=1}^{n}\widehat{u}_{i}^{2}(b,c,\gamma
)G^{2}(X_{i},c,b)K_{n}^{2}(X_{i},v,b),  \label{sigma_b} \\
\widehat{u}_{i}\left( b,c,\gamma \right) &\equiv &H(Y_{i},c)-\left[
\sum\limits_{j=1}^{n}K_{n}(X_{j},(X_{i},\gamma ,c),b)\right]
^{-1}\sum\limits_{j=1}^{n}H(Y_{j},c)K_{n}(X_{j},(X_{i},\gamma ,c),b).  \notag
\end{eqnarray}%
For a given value of $b$, we compare the test statistic $T(b)$ to a critical
value to conclude whether there is significant evidence that the
inequalities in (\ref{f}) are violated for some $v\in \mathcal{V}$. By
applying the test procedure to each candidate value of $b$, the estimator $%
\widehat{\Theta }$ is then the set comprising those $b$ values not rejected
under this pointwise testing rule.

Based on the CLR method, we estimate the critical value using simulations.
Let $B$ be the number of simulation repetitions. For each repetition $s\in
\{1,...,B\}$, we draw an $n$ dimensional vector of mutually independently
standard normally distributed random variables which are also independent of
the data. Let $\eta (s)$ denote this vector. For any compact set $\mathsf{V}%
\subseteq \mathcal{V}$, define%
\begin{equation}
T_{s}^{\ast }(b;\mathsf{V})\equiv \inf\nolimits_{v\in \mathsf{V}}\left[
\left\{ nh_{n}(c,b)h_{n}(c,\gamma )\widehat{\sigma }_{b}(v)\right\}
^{-1}\sum\limits_{i=1}^{n}\eta _{i}(s)\widehat{u}_{i}\left( b,c,\gamma
\right) G(X_{i},c,b)K_{n}(X_{i},v,b)\right] .  \label{T-simulation}
\end{equation}%
We approximate the distribution of $\inf\nolimits_{v\in \mathsf{V}}\left[ (%
\widehat{\sigma }_{b}(v))^{-1}\widehat{m}_{b}(v)\right] $ over $\mathsf{V}%
\subseteq \mathcal{V}$ by that of the simulated quantity $T_{s}^{\ast }(b;%
\mathsf{V})$. Let $\widehat{q}_{\alpha }(b,\mathsf{V})$ be the $\alpha $
level empirical quantile based on the vector $\left( T_{s}^{\ast }(b;\mathsf{%
V})\right) _{s\in \{1,...,B\}}$. One could use $\widehat{q}_{\alpha }(b,%
\mathcal{V})$ as the test critical value. However, following CLR,\ we can
make sharper inference by incorporating the data driven inequality selection
mechanism in the critical value estimation. Let 
\begin{equation}
\widehat{V}_{n}(b)\equiv \left\{ v\in \mathcal{V}:\widehat{m}_{b}(v)\leq -2%
\widehat{q}_{\gamma _{n}}(b,\mathcal{V})\widehat{\sigma }_{b}(v)\right\} ,
\label{contact set}
\end{equation}%
where $\gamma _{n}\equiv 0.1/\log n$. Compared to $\widehat{q}_{\alpha }(b,%
\mathcal{V})$, use of $\widehat{q}_{\alpha }(b,\widehat{V}_{n}(b))$ as the
critical value results in a test procedure concentrating the inference on
those points of $v$ that are more informative for detecting violation of the
non-negativity hypothesis on the function $m_{b}(v)$. In fact, the CLR test
based on the set $\widehat{V}_{n}(b)$ is closely related to the power
improvement methods such as the contact set idea (e.g.,\thinspace %
\citet{linton2010} and \citet{lee2018}), the generalized moment selection
approach (e.g.,\thinspace \citet{andrews2010}, \citet{andrews2013}, and %
\citet{chetverikov2017}), and the iterative step-down approach (e.g., %
\citet{chetverikov2012}) employed in the literature on testing moment
inequalities.\medskip

Assume that $0<\alpha \leq 1/2$. Then we construct the $(1-\alpha )$
confidence set $\widehat{\Theta }$ by setting 
\begin{equation}
\widehat{\Theta }\equiv \left\{ b\in \Gamma :T(b)\geq \widehat{q}_{\alpha
}(b,\widehat{V}_{n}(b))\right\} .  \label{pointwise test}
\end{equation}%
We can establish regularity conditions under which \eqref{size-result} holds
by utilizing the general results of CLR. Since the main focus of this paper
is identification, we omit the technical details for brevity.

In summary, our proposed algorithm takes the following form: {\small 
\begin{framed}
\begin{enumerate}
\item Specify $K(\cdot, \cdot)$, $h_n(c,\gamma)$ and generate $\{ \eta(s): s=1,\ldots,B \}$, that is, $n \times B$ matrix of independent $N(0,1)$.
\item Approximate $\Gamma$ by a grid. For each value $b$ in this grid,
\begin{enumerate}
\item compute $T(b)$ defined in \eqref{T(tau)} and $\widehat{V}_{n}(b)$ defined in \eqref{contact set},
\item  simulate $T_{s}^{\ast }(b;\widehat{V}_{n}(b))$ defined in \eqref{T-simulation} for all $s = 1,\ldots,B$ to
obtain the $\alpha$ quantile $\widehat{q}_{\alpha}(b,\widehat{V}_{n}(b))$,
\item include $b$ in  the $(1-\alpha )$
confidence set $\widehat{\Theta }$ if and only if $T(b)\geq \widehat{q}_{\alpha
}(b,\widehat{V}_{n}(b))$.
\end{enumerate}
\end{enumerate}
\end{framed}}

When the dimension of $\beta $ is high, it is computationally demanding to
obtain $\widehat{\Theta }$ since it is necessary to cary out the pointwise
test in \eqref{pointwise test} for a grid of $\Gamma $. However, this is a
common problem in the literature when a confidence set is based on inverting
a pointwise test. It is worth mentioning that there is additional
computational complexity that is unique in our proposal compared to the case
of conditioning on full covariates. In the proposed algorithm above, it is
necessary to obtain the infimum over $v\equiv (x,\gamma ,c)$. If the
algorithm were based on conditioning on full covariates directly, it would
be necessary to take the infimum over $(x,c)$ only. In other words, to
facilitate dimension reduction in nonparametric estimation, we need to find
the infimum over a larger set of arguments. One way to deal with this
complexity problem is to use the same number of random grid points between
the index and full approaches, as we will demonstrate in a simulation study
in the next section.

In practice, it is important to specify $K(\cdot ,\cdot )$ and $%
h_{n}(c,\gamma )$. For the former, it is conventional to use the product of
a univariate second-order kernel function, for example $K(u_{1},u_{2})=%
\widetilde{K}(u_{1})\widetilde{K}(u_{2})$ with 
\begin{equation}
\widetilde{K}(u)\equiv \frac{15}{16}\left( 1-u^{2}\right) ^{2}1\left\{
\left\vert u\right\vert \leq 1\right\} .  \label{biweight-kernel}
\end{equation}%
For the latter, we recommend using 
\begin{equation}
h_{n}(c,\gamma )=C_{\text{bandwidth}}\times \widehat{s}(G(X,c,\gamma
))\times n^{-1/5},  \label{bdw-rule}
\end{equation}%
where $C_{\text{bandwidth}}$ is a constant, and $\widehat{s}(W)$ denotes the
sample standard deviation for the random variable $W$. If the mapping 
\begin{equation*}
(u_{1},u_{2})\mapsto E\left( G(X,c,b)H(Y,c)|G(X,c,b)=u_{1},G(X,c,\gamma
)=u_{2}\right) f_{b,c,\gamma }\left( u_{1},u_{2}\right)
\end{equation*}%
is twice continuously differentiable for each $c$ and $\gamma $, the optimal
rate for $h_{n}(c,\gamma )$ in minimizing the mean squared error is
proportional to $n^{-1/6}$. The rate of $n^{-1/5}$ in $h_{n}(c,\gamma )$ is
chosen to ensure that the bias is asymptotically negligible due to
undersmoothing. Although our suggested rule-of-thumb for $h_{n}(c,\gamma )$
in \eqref{bdw-rule} is not completely data-driven, it has the advantage that
its scale changes automatically as the scale of $G(X,c,\gamma )$ changes. It
is a difficult task to choose $C_{\text{bandwidth}}$ optimally for our setup
since it involves possibly higher-order comparison between the size and
power of the test in \eqref{pointwise test}. Moreover, one technical issue
arising specifically from our setup is that, when $\gamma $ and $b$ are
close from each other, the two dimensional kernel function is close to the
one dimensional kernel function. It might be better to choose a variable
bandwidth that depends on the distance between $\gamma $ and $b$. We leave
the task of choosing the bandwidth optimally for future research.

We conclude this section by briefly remarking on an alternative form of the
test statistic which can also be used in the algorithm above. Noting that%
\begin{eqnarray*}
&&E\left( G(X,c,b)H(Y,c)|G(X,c,b)=G(x,c,b),G(X,c,\gamma )=G(x,c,\gamma
)\right) \\
&=&G(x,c,b)E\left( H(Y,c)|G(X,c,b)=G(x,c,b),G(X,c,\gamma )=G(x,c,\gamma
)\right) ,
\end{eqnarray*}%
we can thus replace each individual specific index $G(X_{i},c,b)$ in the
summation term of (\ref{m}) by the non-stochastic term $G(x,c,b)$, thereby
resulting in an alternative definition of the estimator $\widehat{m}_{b}(v)$%
, which remains to be a consistent estimator for $m_{b}(v)$. Making such
replacements as well in (\ref{sigma_b}) and (\ref{T-simulation}) for the
definitions of $\widehat{\sigma }_{b}^{2}(v)$ and $T_{s}^{\ast }(b;\mathsf{V}%
)$, respectively, we can then apply the algorithm above to obtain an
alternative confidence set which also satisfies \eqref{size-result}. Note
that the standardized estimator $\left[ \widehat{\sigma }_{b}(v)\right] ^{-1}%
\widehat{m}_{b}(v)$ for this alternative approach becomes%
\begin{equation*}
\func{sgn}\left( G(x,c,b)\right) \left[ \sum\nolimits_{i=1}^{n}\widehat{u}%
_{i}^{2}(b,c,\gamma )K_{n}^{2}(X_{i},v,b)\right] ^{-1/2}\sum%
\nolimits_{i=1}^{n}H(Y_{i},c)K_{n}(X_{i},v,b),
\end{equation*}%
which is discontinuous in the argument $x$. Operationally, this indicates
that gradient based minimization algorithms become inapplicable for
computing the statistics $T(b)$ and $T_{s}^{\ast }(b;\mathsf{V})$ defined
under this alternative inference approach.

\section{Simulation study\label{sec:MC}}

The main purpose of this simulation study is to compare finite-sample
performance of the approach of conditioning on indexing variables with that
of conditioning on full covariates. We use the binary response model set
forth in Section \ref{Model} for the simulation design. The data is
generated according to the following setup:%
\begin{equation}
Y=1\{X^{\prime }\beta \geq \varepsilon \},
\end{equation}%
where $X=(X_{1},...,X_{d})$ is a $d$ dimensional covariate vector with $%
d\geq 2$, 
\begin{equation*}
\varepsilon =\left[ 1+\sum\nolimits_{k=1}^{d}X_{k}^{2}\right] ^{1/2}\xi,
\end{equation*}%
and $\xi $ is standard normally distributed and independent of $X$. Let $%
\widetilde{X}=(X_{2},...,X_{d})$ be a $\left( d-1\right) $ dimensional
vector of mutually independently and uniformly distributed random variables
on the interval $[-1,1]$. The covariate $X_{1}$ is specified by%
\begin{equation}
X_{1}=\func{sgn}(X_{2})U,  \label{DGP for X1}
\end{equation}%
where $U$ is a uniformly distributed random variable on the interval $[0,1]$
and is independent of $(\widetilde{X},\xi )$. We set%
\begin{equation*}
\beta _{1}=1\text{ and }\beta _{k}=0\text{ for }k\in \{2,...,d\}.
\end{equation*}%
The preference parameter space is specified to be%
\begin{equation}
\Gamma \equiv \{b\in \mathbb{R}^{d}:b_{1}=1,\left( b_{2},...,b_{d}\right)
\in \lbrack -1,1]^{d-1}\}.
\end{equation}%
Note that, by (\ref{DGP for X1}), $X^{\prime }\beta =X_{1}$ so that the sign
of the true index $X^{\prime }\beta $ is determined by that of $X_{2}$ but
the magnitude of $X^{\prime }\beta $ is independent of $\widetilde{X}$.
Using this fact and the simulation configurations, it is straightforward to
see that the event $X^{\prime }\beta >0$ and $X^{\prime }b<0$ can occur with
positive probability for any $b\in \Gamma $ such that either $b_{2}<0$ or $%
b_{k}\neq 0$ for some $k\in \{3,...,d\}.$ On the other hand, by (\ref{DGP
for X1}), we also find that $X^{\prime }\beta =X_{1}$ and $X^{\prime
}b=X_{1}+X_{2}b_{2}$ have the same sign almost surely for any $b\in \Gamma $
such that $b_{2}\geq 0$ and $b_{k}=0$ for $k\in \{3,...,d\}$. Using these
facts and by (\ref{theta_sign}), the identified set $\Theta $ in this
simulation setup is therefore given by%
\begin{equation}
\Theta =\{b\in \Gamma :b_{2}\geq 0\text{ and }b_{k}=0\text{ for }k\in
\{3,...,d\}\}\text{.}  \label{identified set}
\end{equation}

Recall that the present simulation design also satisfies the general
framework of Section \ref{Extensions} by taking $G(X,c,b)=X^{\prime }b$ and $%
H(Y,c)=Y-0.5$. Let $Index$ and $Full$ be shorthand expressions for the index
formulated and full covariate approaches, respectively. We implement the $%
Index$ approach using the inference procedure of Section \ref%
{sec:inference-CLR}. We compute the term $K_{n}(X,v,b)$ using%
\begin{equation*}
K_{n}(X,v,b)=\widetilde{K}\left( \frac{x^{\prime }b-X^{\prime }b}{\widehat{s}%
(X^{\prime }b)h_{n}}\right) \widetilde{K}\left( \frac{x^{\prime }\gamma
-X^{\prime }\gamma }{\widehat{s}(X^{\prime }\gamma )h_{n}}\right)
\end{equation*}%
where $v=(x,\gamma )$, $\widetilde{K}(\cdot )$ is the univariate biweight
kernel function defined in \eqref{biweight-kernel}, and $\widehat{s}(W)$
denotes the estimated standard deviation for the random variable $W$. As
suggested in the previous section, the bandwidth sequence $h_{n}$ is
specified by 
\begin{equation}
h_{n}=c_{Index}n^{-1/5},  \label{h_index}
\end{equation}%
where $c_{Index}$ is a bandwidth scale. 

The $Full$ approach is based on inversion of the kernel-type CLR test for
the inequalities that $m_{b,Full}(x)\geq 0$ for all $x\in \Gamma _{X}$,
where 
\begin{equation}
m_{b,Full}(x)\equiv E\left( X^{\prime }b\left( Y-0.5\right) |X=x\right)
f_{X}\left( x\right)  \label{moment inequality for full covariates}
\end{equation}%
and $f_{X}$ denotes the joint density of $X$. As in the $Index$ approach, we
consider the kernel type estimator 
\begin{equation}
\widehat{m}_{b,Full}(x)\equiv \left( nh_{n}^{d}\right)
^{-1}\sum\limits_{i=1}^{n}X_{i}^{\prime }b\left( Y_{i}-0.5\right)
K_{n,Full}(X_{i},x),
\end{equation}%
where%
\begin{equation}
K_{n,Full}(X_{i},x)\equiv \prod\nolimits_{k=1}^{d}\widetilde{K}_{Full}\left( 
\frac{x_{k}-X_{i,k}}{\widehat{s}(X_{i,k})h_{n,Full}}\right) ,
\end{equation}%
$\widetilde{K}_{Full}(\cdot )$ is the univariate $p$th order biweight kernel
function (see \citet{hansen2005}), and $h_{n,Full}$ is a bandwidth sequence
specified by 
\begin{equation}
h_{n,Full}=c_{Full}n^{-r},  \label{bandwidth rate under Full approach}
\end{equation}%
where $c_{Full}$ and $r$ denote the bandwidth scale and rate, respectively.
The test statistic for the $Full$ approach is given by 
\begin{equation}
T_{Full}(b)\equiv \inf\nolimits_{x\in \Gamma _{X}}\frac{\widehat{m}%
_{b,Full}(x)}{\widehat{\sigma }_{b,Full}(x)},
\end{equation}%
where%
\begin{eqnarray*}
\widehat{\sigma }_{b,Full}^{2}(x) &\equiv
&n^{-2}h_{n,Full}^{-2d}\sum\limits_{i=1}^{n}\widehat{u}_{i,Full}^{2}\left(
X_{i}^{\prime }b\right) ^{2}K_{n,Full}^{2}(X_{i},x), \\
\widehat{u}_{i,Full} &\equiv &Y_{i}-\left[ \sum%
\limits_{j=1}^{n}K_{n,Full}(X_{j},X_{i})\right] ^{-1}\sum%
\limits_{j=1}^{n}Y_{j}K_{n,Full}(X_{j},X_{i}).
\end{eqnarray*}%
We computed the simulated CLR test critical value that also embedded the
inequality selection mechanism. By comparing $T_{Full}(b)$ to the test
critical value, we constructed under the $Full$ approach the confidence set
that also satisfies (\ref{size-result}).

The nominal significance level $\alpha $ was set to be 0.05. Let $\widehat{%
\Theta }_{Index}$ and $\widehat{\Theta }_{Full}$ denote the $\left( 1-\alpha
\right) $ level confidence sets constructed under the $Index$ and $Full$
approaches, respectively. For $s\in \{Index,Full\}$ and for a fixed value of 
$b$, we calculated $\widehat{P}_{s}(b)$, which is the simulated
finite-sample probability of the event $b\notin \widehat{\Theta }_{s}$ based
on 1000 simulation repetitions. For each repetition, we generated $n\in
\left\{ 250,500,1000\right\} $ observations according to the data generating
design described above. We used 4000 simulation draws to calculate $\widehat{%
q}_{\alpha }(b,\widehat{V}_{n}(b))$ for the $Index$ approach and to estimate
the CLR test critical value for the $Full$ approach. We implemented for the $%
Full$ approach the minimization operation based on grid search over 1000
grid points of $x$ randomly drawn from the joint distribution of $X$. For
the $Index$ approach, the minimization was implemented by grid search over
1000 grid points of $(x,\gamma )$ for which $x$ was also randomly drawn from
the distribution of $X$, and $\gamma $ was drawn from uniform distribution
on the space $\Gamma $ and independently of the search direction in $x$.

We conducted simulations for $d\in \{3,4,5,10\}$. All simulation experiments
were programmed in Gauss 9.0 and performed on a desktop PC (Windows 7)
equipped with 32 GB RAM and a CPU processor (Intel i7-5930K) of 3.5 GHz. For
the $Full$ approach, both the bandwidth rate $r$ and the order $p$\ of $%
\widetilde{K}_{Full}$ depend on the covariate dimension. These were
specified to fulfill the regularity conditions for the CLR kernel type
conditional moment inequality tests (see discussions on Appendix F of
CLR\thinspace (pp.\thinspace 7-9, Supplementary Material)). Note that, for $%
b\in \Theta $, $\widehat{P}_{Index}\left( b\right) $ ($\widehat{P}%
_{Full}\left( b\right) $) is the simulated null rejection probability of the
corresponding CLR test under the $Index$ ($Full$) approach, whereas, for $%
b\notin \Theta $, it is the power of the test. For simplicity, we computed $%
\widehat{P}_{Index}(b)$ and $\widehat{P}_{Full}(b)$ for $b$ values specified
as $b=\left( b_{1},b_{2},...,b_{d}\right) $ where $b_{1}=1$, $b_{2}\in
\{0,0.5,-1\}$, $b_{k}=0$ for $k\in \{3,..,d\}$. For these candidate values
of $b$, we experimented over various bandwidth scales to determine the value
of $c_{Index}$ ($c_{Full}$) with which the $Index$ ($Full$) approach
exhibits the best overall performance in terms of its corresponding size and
power. Table 1 presents the settings of $r$ and $p$ and the chosen bandwidth
scales $c_{Index}$ and $c_{Full}$ in the simulation.

\begin{center}
\begin{tabular}{lllll}
\multicolumn{5}{c}{Table 1: Settings of $r,p,c_{Index}$ and $c_{Full}$} \\ 
\hline\hline
$d$ & 3 & 4 & 5 & 10 \\ 
$r$ & 11/70 & 1/9 & 21/220 & 1/21 \\ 
$p$ & 2 & 4 & 4 & 6 \\ 
\multicolumn{5}{c}{sample size $250$} \\ 
$c_{Index}$ & 3.05 & 3.45 & 3.7 & 4.1 \\ 
$c_{Full}$ & 2.65 & 4.8 & 5.6 & 8.35 \\ 
\multicolumn{5}{c}{sample size $500$} \\ 
$c_{Index}$ & 2.55 & 2.95 & 3.05 & 3.75 \\ 
$c_{Full}$ & 2.35 & 4.3 & 4.9 & 8 \\ 
\multicolumn{5}{c}{sample size $1000$} \\ 
$c_{Index}$ & 2 & 2.5 & 2.75 & 3.5 \\ 
$c_{Full}$ & 2.15 & 3.95 & 4.45 & 7.7 \\ \hline
\end{tabular}
\end{center}

Tables 2 and 3 present the simulation results that compare performance of
the $Index$ and $Full$ approaches.

\begin{center}
\begin{tabular}{lllllllll}
\multicolumn{9}{l}{Table 2: Simulated null rejection probabilities} \\ 
\hline\hline
$d$ & 3 & 4 & 5 & 10 & 3 & 4 & 5 & 10 \\ \hline
& \multicolumn{4}{c}{$b_{2}=0$} & \multicolumn{4}{c}{$b_{2}=0.5$} \\ \hline
\multicolumn{9}{c}{sample size $250$} \\ 
$\widehat{P}_{Index}$ & .034 & .029 & .034 & .050 & .051 & .054 & .052 & .052
\\ 
$\widehat{P}_{Full}$ & .031 & .043 & .046 & .050 & .050 & .053 & .052 & .055
\\ 
\multicolumn{9}{c}{sample size $500$} \\ 
$\widehat{P}_{Index}$ & .030 & .036 & .039 & .042 & .051 & .054 & .052 & .050
\\ 
$\widehat{P}_{Full}$ & .032 & .034 & .043 & .044 & .049 & .048 & .054 & .053
\\ 
\multicolumn{9}{c}{sample size $1000$} \\ 
$\widehat{P}_{Index}$ & .047 & \multicolumn{1}{c}{.045} & .041 & .048 & .054
& \multicolumn{1}{c}{.053} & .051 & .054 \\ 
$\widehat{P}_{Full}$ & .029 & .044 & .041 & .042 & .046 & .051 & .047 & .051
\\ \hline
\end{tabular}

\bigskip

\begin{tabular}{llll|lll|lll}
\multicolumn{10}{l}{Table 3: Simulated test power for $b_{2}=-1$ ($\mathit{%
ratio\equiv }\widehat{P}_{Index}/\widehat{P}_{Full}$)} \\ \hline\hline
$d$ & $\widehat{P}_{Index}$ & $\widehat{P}_{Full}$ & $\mathit{ratio}$ & $%
\widehat{P}_{Index}$ & $\widehat{P}_{Full}$ & $\mathit{ratio}$ & $\widehat{P}%
_{Index}$ & $\widehat{P}_{Full}$ & $\mathit{ratio}$ \\ 
& \multicolumn{3}{c|}{$n=250$} & \multicolumn{3}{|c|}{$n=500$} & 
\multicolumn{3}{|c}{$n=1000$} \\ 
3 & .583 & .601 & .970 & .771 & .731 & 1.05 & .927 & .828 & 1.11 \\ 
4 & .541 & .530 & 1.02 & .733 & .653 & 1.12 & .868 & .758 & 1.14 \\ 
5 & .500 & .393 & 1.27 & .699 & .624 & 1.12 & .806 & .738 & 1.09 \\ 
10 & .409 & .216 & 1.89 & .474 & .212 & 2.23 & .520 & .225 & 2.31 \\ \hline
\end{tabular}
\end{center}

From Table 2, we can see that all $\widehat{P}_{Index}$ and $\widehat{P}%
_{Full}$ values in all the simulation cases are either below or close to the
nominal level 0.05 with the maximal value being 0.055 and occurring for the $%
Full$ approach with sample size 250 under the setup of $d=10$ and $b_{2}=0.5$%
. For both methods, there is slight over-rejection for the case of $%
b_{2}=0.5 $. At the true data generating value ($b_{2}=0$), both $\widehat{P}%
_{Index}$ and $\widehat{P}_{Full}$ are well capped by 0.05 and the
confidence sets $\widehat{\Theta }_{Index}$ and $\widehat{\Theta }_{Full}$
can hence cover the true parameter value with probability at least 0.95 in
all simulations.

For the power of the test, we compare the $Index$ and $Full$ approaches
under the same covariate configuration. Table 3 indicates that power of the $%
Index$ approach dominates that of the $Full$ approach in almost all
simulation configurations. Moreover, at larger sample size ($n=1000$), power
of the $Index$ approach exceeds 0.8 in almost all cases whereas that of the $%
Full$ approach does so only for the case of $d=3$. The power difference
between these two approaches tends to increase as either the sample size or
the covariate dimension increases. For the case of $d=10$, it is noted that
there is substantial power gain from using the $Index$ approach. For this
covariate specification, the curse of dimensionality for the $Full$ approach
is quite apparent because the corresponding $\widehat{P}_{Full}$ values vary
only slightly across sample sizes. In short, the simulation results suggest
that the $Index$ approach may alleviate the problem associated with the
curse of dimensionality and we could therefore make sharper inference by
using the $Index$ approach for a model with a high dimensional vector of
covariates.

In practice, one will not carry out Monte Carlo experiments, but will
compute a confidence set for the parameter. To do so, one must compute the
statistic and its associated critical value for each value of the grid for
the parameter space. However, this is a common problem in the literature
that relies on inverting a pointwise test. To give a sense of the
computation time for obtaining the confidence set, we now report the average
computation time in the Monte Carlo experiments. It took about 10, 38, and
154 CPU seconds on average for a given value of $b$ when the sample size is
250, 500 and 1000, respectively. These computation times were not sensitive
to the covariate dimension $d$ since we used the same number of random grid
points of $(x,\gamma )$. If we use 100 grid points for constructing the
confidence set, then the resulting computation time will be 0.28, 1.06, and
4.27 CPU hours, respectively, for $n=250,500$, and 1000. In the
implementation of our algorithm, there are two kinds of grid search: (i) the
random grid for $(x,\gamma)$ to evaluate $T(b)$ and its critical value for
each $b$ and (ii) the other grid for obtaining the confidence set for $\beta$%
. For the former, it might be desirable to use a larger random grid for $%
(x,\gamma )$ or to adopt a more sophisticated optimization algorithm to
compute the test statistic and its critical value as $d$ gets large. For the
latter, if the degree of precision is fixed, we will need more grid points
as $d$ gets large. Hence, in practice, it would be quite computationally
demanding to construct the confidence set when $n=1000$ and $d=10$.

\section{Application of the dimension reducing characterization approach to
the monotone single index model\label{Monotone single index model}}

In this section, we discuss how to apply our dimension reducing approach to
the single index model, which admits related yet different sign restrictions
from those studied in Section \ref{Extensions}. We consider the monotone
single index model where the conditional mean of the outcome variable $Y$
given a $d$ dimensional covariate vector $X$ satisfies%
\begin{equation}
E(Y|X)=G(X^{\prime }\beta )  \label{single index model}
\end{equation}%
for some unknown strictly increasing function $G$ and a finite dimensional
parameter vector $\beta \in \Gamma $, where $\Gamma \subset \mathbb{R}^{d}$
denotes the space of the index coefficients. Model (\ref{single index model}%
) incorporates various semiparametric models such as the generalized
regression model where $Y=G(X^{\prime }\beta )+\varepsilon $ with $%
E(\varepsilon |X)=0$, and the transformation model where $Y=H(X^{\prime
}\beta +\varepsilon )$ with the function $H$ being an unknown strictly
increasing transformation function and $\varepsilon $ being a continuous
unobservable that is independent of $X$. Other examples satisfying the
restriction (\ref{single index model}) also include the single-index binary
choice model. See \citet{han1987} for further details.

For Model (\ref{single index model}), it is known that, even with location
and scale normalization, the true value $\beta $ may remain non-identified
provided that the index $X^{\prime }\beta $ does not exhibit sufficient
variation. This non-identification can arise even when the model admits a
continuous covariate (see Example 2.4 of \citet{horowitz1998}).

In what follows, let $(Y_{1},X_{1})$ and $(Y_{2},X_{2})$ be two independent
random vectors that are drawn from the joint distribution of $(Y,X)$. By (%
\ref{single index model}) and monotonicity of $G$, we have that, with
probability 1,%
\begin{eqnarray}
X_{1}^{\prime }\beta &>&X_{2}^{\prime }\beta \Longleftrightarrow
E(Y_{1}|X_{1})>E(Y_{2}|X_{2}),  \label{sign L} \\
X_{1}^{\prime }\beta &=&X_{2}^{\prime }\beta \Longleftrightarrow
E(Y_{1}|X_{1})=E(Y_{2}|X_{2}),  \label{sign E} \\
X_{1}^{\prime }\beta &<&X_{2}^{\prime }\beta \Longleftrightarrow
E(Y_{1}|X_{1})<E(Y_{2}|X_{2}).  \label{sign S}
\end{eqnarray}%
Given another parameter vector $b\in \Gamma $, we say that $b$ is
observationally equivalent to the true value $\beta $ if and only if the
sign equivalence restrictions (\ref{sign L}), (\ref{sign E}) and (\ref{sign
S}) hold with $b$ in place of $\beta $. In other words, the set 
\begin{equation*}
\Theta _{0}\equiv \{b\in \Gamma :(\ref{sign L}),(\ref{sign E})\text{ and }(%
\ref{sign S})\text{ holds with }b\text{ in place of }\beta \text{ almost
surely}\}
\end{equation*}%
is the identified set of parameter values that are compatible with the
restriction (\ref{single index model}).

Define the set%
\begin{equation*}
\Theta \equiv \{b\in \Gamma :\left( X_{1}-X_{2}\right) ^{\prime }b\left[
E(Y_{1}|X_{1})-E(Y_{2}|X_{2})\right] \geq 0\text{ almost surely}\}.
\end{equation*}

\begin{condition}
\label{Condition 1}For all $b\in \Gamma $, the event that $X_{1}^{\prime
}b=X_{2}^{\prime }b$ occurs with zero probability.
\end{condition}

Condition \ref{Condition 1} is a mild continuity assumption, which can hold
if the covariate vector includes a continuously distributed component and,
for all $b\in \Gamma $, the index coefficient associated with that component
is non-zero. It is straightforward to see that Condition \ref{Condition 1}
implies $\Theta _{0}=\Theta $ so that we can characterize the identified set
using moment inequalities conditional on the covariates. Note that the sign
restrictions (\ref{sign L}), (\ref{sign E}) and (\ref{sign S}) are not
nested in the general framework given by Assumption \ref{same-sign
assumption} of Section \ref{Extensions}. Nonetheless, we can still apply the
idea of conditioning on indexing variables to derive an equivalent yet
dimension reducing characterization of the identified set.

Let 
\begin{equation*}
\widetilde{\Theta }\equiv \{b\in \Gamma :\left( X_{1}-X_{2}\right) ^{\prime
}b\left[ E(Y_{1}|X_{1}^{\prime }b,X_{1}^{\prime }\gamma
)-E(Y_{2}|X_{2}^{\prime }b,X_{2}^{\prime }\gamma )\right] \geq 0\text{
almost surely for all }\gamma \in \Gamma \}.
\end{equation*}

\begin{theorem}
\label{dimension reducing characterization for the monotone index model}%
Assume (\ref{single index model}) and Condition \ref{Condition 1}. Then $%
\Theta _{0}=\Theta =\widetilde{\Theta }.$
\end{theorem}

Theorem \ref{dimension reducing characterization for the monotone index
model} indicates that we can also derive the identified set using moment
inequalities conditional on indexing variables. Using this result, we can
construct a confidence set for the true value $\beta $ based on the method
of CLR. Implementation of such a confidence set is analogous to that
described in Section \ref{sec:inference-CLR} and its details are summarized
in Appendix \ref{confidence set in the monotone single index model} of the
paper.

\section{Conclusions\label{sec:conclusions}}

This paper studies inference of preference parameters in semiparametric
discrete choice models when these parameters are not point identified and
the identified set is characterized by a class of conditional moment
inequalities. Exploring the semiparametric modeling restrictions, we show
that the identified set can be equivalently formulated by moment
inequalities conditional on only two continuous indexing variables. Such
formulation holds regardless of the covariate dimension, thereby breaking
the curse of dimensionality for nonparametric inference of the underlying
conditional moment functions. We also apply this dimension reducing
characterization approach to the monotone single index model and to a
variety of semiparametric models under which the sign of conditional
expectation of a certain transformation of the outcome is the same as that
of the indexing variable.

There is a growing number of inference methods for conditional moment
inequalities. The instrumental variable approach of \citet{andrews2013} does
not rely on nonparametric estimation of conditional expectation.
Nevertheless, the instruments required to convert the conditional moment
inequalities to unconditional ones increase with the covariate dimension. In
addition to the Andrews-Shi and CLR approaches, other existing inference
procedures include \citet{armstrong2014, armstrong2015}, %
\citet{armstrong2016}, \citet{chetverikov2017}, \citet{lee2013,lee2018} and %
\citet{menzel2014} among others. The performance of all of these methods are
related to the dimension of conditioning variables. 
\citet[][see Tables 1
and 2]{armstrong2016choice} gives the local power properties of popular
approaches in the literature and shows that the local power decreases as the
dimension of conditional variables increases in each case that he considers.
Thus, the curse of dimensionality problem is not limited to a particular
test statistic. It will be an interesting further research topic to
incorporate these alternative methods with the dimension reducing
characterization result of this paper.

\appendix%

\section{Appendix\label{sec:appendix}}

\subsection{Proofs}

\begin{proof}[Proof of Lemmas \protect\ref{Lemma 1} and \protect\ref{Lemma 2}%
]
Lemma \ref{Lemma 2} nests Lemma 1. So we focus on the proof of Lemma \ref%
{Lemma 2}. To prove Lemma \ref{Lemma 2}, we apply Theorem \ref{dimension
reducing characterization} with $G(X,c,b)=X^{\prime }b$ and $H(Y,c)=Y-\tau .$
Note that Assumptions \ref{same-sign assumption} and \ref{continuity} of
Theorem \ref{dimension reducing characterization} are both satisfied under
Conditions \ref{continuous covariate} and \ref{condition 1}. Hence, the
result (\ref{e}) follows from an application of Theorem \ref{dimension
reducing characterization}.
\end{proof}

\begin{proof}[Proof of Theorem \protect\ref{dimension reducing
characterization}]
By Assumptions \ref{same-sign assumption} and \ref{continuity}, the event
that $E\left( H(Y,c)|X\right) =0$ also occurs with zero probability. It
hence follows that $\Theta _{0}=\Theta $.

We now show that $\Theta =\widetilde{\Theta }$. Suppose that $b\in \Theta $.
Then with probability $1$, 
\begin{equation}
G(X,c,b)\geq 0\Longleftrightarrow E\left( H(Y,c)|X\right) \geq 0.
\label{ineq1}
\end{equation}%
Note that%
\begin{equation}
E(H(Y,c)|G(X,c,b),G(X,c,\gamma ))=E(E(H(Y,c)|X)|G(X,c,b),G(X,c,\gamma )).
\label{e1}
\end{equation}%
By (\ref{ineq1}), for any $\gamma \in \Gamma $, the right-hand side of (\ref%
{e1}) has the same sign as $G(X,c,b)$ does with probability $1$. Hence, $%
b\in \widetilde{\Theta }$ and it follows that $\Theta \subset \widetilde{%
\Theta }$.

On the other hand, assume that $b\in \widetilde{\Theta }$. Since $\beta \in
\Gamma $, we have that $G(X,c,b)$ and $E(H(Y,c)|G(X,c,b),G(X,c,\beta ))$
have the same sign with probability $1$. Using (\ref{e1}) and Assumption \ref%
{same-sign assumption}, we see that $E(H(Y,c)|G(X,c,b),G(X,c,\beta ))$, $%
G(X,c,\beta )$ and $E\left( H(Y,c)|X\right) $ also have the same sign with
probability 1. Therefore, we can deduce that $b\in \Theta $ and hence $%
\widetilde{\Theta }\subset $ $\Theta $. Putting together all these results,
we thus have that $\Theta _{0}=\Theta =\widetilde{\Theta }$.

To complete the proof, it remains to show that $\underline{\Theta }\subset
\Theta \subset \overline{\Theta }$. Note that, because $\beta \in \Gamma $,
we have that, for any $b\in \underline{\Theta }$, $G(X,c,b)$ and $%
E(H(Y,c)|G(X,c,\beta ))$ have the same sign with probability $1$. By
Assumption \ref{same-sign assumption}, the law of iterated expectations, and
using similar arguments in the proof above, it is straightforward to see
that $E(H(Y,c)|G(X,c,\beta ))$, $G(X,c,\beta )$ and $E\left( H(Y,c)|X\right) 
$ also have the same sign with probability $1$. Hence, it follows that $b\in
\Theta $ so that $\underline{\Theta }\subset \Theta $.

To verify that $\Theta \subset \overline{\Theta }$, note that, by the law of
iterated expectations, the sign equivalence result (\ref{ineq1}) implies
that $G(X,c,b)$ and $E(H(Y,c)|G(X,c,b))$ also have the same sign with
probability $1$. Thus, we can deduce that $\Theta \subset \overline{\Theta }$%
. Putting together all the proved results, we therefore have that $%
\underline{\Theta }\subset \Theta _{0}=\Theta =\widetilde{\Theta }\subset 
\overline{\Theta }$.
\end{proof}

\begin{proof}[Proof of Theorem \protect\ref{dimension reducing
characterization for the monotone index model}]
By Condition \ref{Condition 1} and (\ref{sign E}), the event that $%
E(Y_{1}|X_{1})=E(Y_{2}|X_{2})$ also occurs with zero probability. It thus
follows that $\Theta _{0}=\Theta $.

We now show that $\Theta =\widetilde{\Theta }$. Suppose that $b\in \Theta $.
Then with probability 1, 
\begin{equation}
X_{1}^{\prime }b\geq X_{2}^{\prime }b\Longleftrightarrow E(Y_{1}|X_{1})\geq
E(Y_{2}|X_{2}).  \label{sign restriction 1}
\end{equation}%
Note that, for all $\gamma \in \Gamma $,%
\begin{eqnarray}
&&E(Y_{1}|X_{1}^{\prime }b,X_{1}^{\prime }\gamma ,X_{2}^{\prime
}b,X_{2}^{\prime }\gamma )  \notag \\
&=&E\left[ \left( E\left( Y_{1}|X_{1},X_{2}\right) -E\left(
Y_{2}|X_{1},X_{2}\right) \right) +E\left( Y_{2}|X_{1},X_{2}\right)
|X_{1}^{\prime }b,X_{1}^{\prime }\gamma ,X_{2}^{\prime }b,X_{2}^{\prime
}\gamma \right]   \notag \\
&=&E\left[ \left( E\left( Y_{1}|X_{1}\right) -E\left( Y_{2}|X_{2}\right)
\right) +E\left( Y_{2}|X_{2}\right) |X_{1}^{\prime }b,X_{1}^{\prime }\gamma
,X_{2}^{\prime }b,X_{2}^{\prime }\gamma \right]   \label{a}
\end{eqnarray}%
where (\ref{a}) follows from statistical independence between $(Y_{1},X_{1})$
and $(Y_{2},X_{2})$. Using (\ref{sign restriction 1}) and (\ref{a}), we then
have that, with probability 1, 
\begin{equation}
X_{1}^{\prime }b\geq X_{2}^{\prime }b\Longleftrightarrow
E(Y_{1}|X_{1}^{\prime }b,X_{1}^{\prime }\gamma ,X_{2}^{\prime
}b,X_{2}^{\prime }\gamma )\geq E[E\left( Y_{2}|X_{2}\right) |X_{1}^{\prime
}b,X_{1}^{\prime }\gamma ,X_{2}^{\prime }b,X_{2}^{\prime }\gamma ].
\label{b1}
\end{equation}%
Using again the independence between $(Y_{1},X_{1})$ and $(Y_{2},X_{2})$, it
follows that, for all $b$ and $\gamma \in \Gamma $, 
\begin{equation}
E(Y_{1}|X_{1}^{\prime }b,X_{1}^{\prime }\gamma ,X_{2}^{\prime
}b,X_{2}^{\prime }\gamma )=E(Y_{1}|X_{1}^{\prime }b,X_{1}^{\prime }\gamma )
\label{b2}
\end{equation}%
and 
\begin{equation}
E[E\left( Y_{2}|X_{2}\right) |X_{1}^{\prime }b,X_{1}^{\prime }\gamma
,X_{2}^{\prime }b,X_{2}^{\prime }\gamma ]=E\left[ E\left( Y_{2}|X_{2}\right)
|X_{2}^{\prime }b,X_{2}^{\prime }\gamma \right] =E(Y_{2}|X_{2}^{\prime
}b,X_{2}^{\prime }\gamma ).  \label{b3}
\end{equation}%
Putting together (\ref{b1}), (\ref{b2}) and (\ref{b3}), we can deduce that $%
b\in \widetilde{\Theta }$ and thus $\Theta \subset \widetilde{\Theta }$.

It remains to show that $\widetilde{\Theta }\subset \Theta $. Note that,
because $\beta \in \Gamma $, it follows from (\ref{a}) that, for all $b\in
\Gamma $, 
\begin{eqnarray*}
&&E(Y_{1}|X_{1}^{\prime }b,X_{1}^{\prime }\beta ,X_{2}^{\prime
}b,X_{2}^{\prime }\beta ) \\
&=&E\left[ \left( E\left( Y_{1}|X_{1}\right) -E\left( Y_{2}|X_{2}\right)
\right) +E\left( Y_{2}|X_{2}\right) |X_{1}^{\prime }b,X_{1}^{\prime }\beta
,X_{2}^{\prime }b,X_{2}^{\prime }\beta \right] .
\end{eqnarray*}%
Thus, using (\ref{sign L}), (\ref{sign E}), (\ref{sign S}), (\ref{b2}) and (%
\ref{b3}), we have that, with probability 1,%
\begin{eqnarray}
X_{1}^{\prime }\beta  &>&X_{2}^{\prime }\beta \Longleftrightarrow
E(Y_{1}|X_{1}^{\prime }b,X_{1}^{\prime }\beta )>E(Y_{2}|X_{2}^{\prime
}b,X_{2}^{\prime }\beta ),  \label{l} \\
X_{1}^{\prime }\beta  &=&X_{2}^{\prime }\beta \Longleftrightarrow
E(Y_{1}|X_{1}^{\prime }b,X_{1}^{\prime }\beta )=E(Y_{2}|X_{2}^{\prime
}b,X_{2}^{\prime }\beta ),  \label{eq} \\
X_{1}^{\prime }\beta  &<&X_{2}^{\prime }\beta \Longleftrightarrow
E(Y_{1}|X_{1}^{\prime }b,X_{1}^{\prime }\beta )<E(Y_{2}|X_{2}^{\prime
}b,X_{2}^{\prime }\beta ).  \label{s}
\end{eqnarray}%
Hence, for all $b\in \Gamma $, we have that, with probability 1, 
\begin{equation}
X_{1}^{\prime }\beta \geq X_{2}^{\prime }\beta \Longleftrightarrow
E(Y_{1}|X_{1}^{\prime }b,X_{1}^{\prime }\beta )\geq E(Y_{2}|X_{2}^{\prime
}b,X_{2}^{\prime }\beta ).  \label{sign restriction 2}
\end{equation}%
Note that, by Condition \ref{Condition 1} and (\ref{eq}), the event that $%
E(Y_{1}|X_{1}^{\prime }b,X_{1}^{\prime }\beta )=E(Y_{2}|X_{2}^{\prime
}b,X_{2}^{\prime }\beta )$ also occurs with zero probability. Using (\ref%
{sign restriction 2}) and the presumption that $\beta \in \Gamma $, we have
that, for any $b\in \widetilde{\Theta }$, 
\begin{equation}
X_{1}^{\prime }b\geq X_{2}^{\prime }b\Longleftrightarrow
E(Y_{1}|X_{1}^{\prime }b,X_{1}^{\prime }\beta )\geq E(Y_{2}|X_{2}^{\prime
}b,X_{2}^{\prime }\beta )\Longleftrightarrow X_{1}^{\prime }\beta \geq
X_{2}^{\prime }\beta .  \label{g}
\end{equation}%
Because (\ref{sign restriction 1}) holds when $b=\beta $, it therefore
follows from (\ref{g}) that $\widetilde{\Theta }\subset \Theta $.
\end{proof}

\subsection{Illustrating examples for non-equivalence of the sets $\protect%
\underline{\Theta }$, $\Theta $ and $\overline{\Theta }$}

\label{Appendix A}

Recall that $\Gamma $ denotes the space of preference parameter vectors $b$
of which the magnitude of the first element is equal to $1$.

\subsection*{Example 1: $\Theta $ can be a proper subset of $\overline{%
\Theta }$}

Let $X=(X_{1},X_{2})$ be a bivariate vector where $X_{1}\sim U(0,1)$, $%
X_{2}\sim U(-1,1)$ and $X_{1}$ is stochastically independent of $X_{2}$.
Assume that $\beta =(1,1)$ and $\varepsilon $=$\sqrt{1+X_{2}^{2}}\xi $ where 
$\xi $ is a random variable independent of $X$ and has distribution function 
$F_{\xi }(t)$ defined as%
\begin{equation}
F_{\xi }(t)\equiv \left\{ 
\begin{array}{c}
G_{1}(t)\text{ \ if }t\in (-\infty ,-1] \\ 
\tau +ct\text{ if }t\in (-1,1] \\ 
G_{2}(t)\text{ if }t\in (1,\infty )%
\end{array}%
\right.  \label{F(t)}
\end{equation}%
where $c\in \left( 0,\min \{\tau ,1-\tau \}\right) $ is a fixed real
constant, $G_{1}$ and $G_{2}$ are continuous differentiable and strictly
increasing functions defined on the domains that include the intervals $%
(-\infty ,-1]$ and $(1,\infty )$, respectively, and satisfy that 
\begin{equation}
G_{1}(-1)=\tau -c\text{, }\lim_{t\longrightarrow -\infty }G_{1}(t)=0\text{, }%
G_{2}(1)=\tau +c\text{, and}\lim_{t\longrightarrow \infty }G_{2}(t)=1.
\end{equation}%
Consider the value $\widetilde{b}\equiv (1,0)$. Note that $X^{\prime }\beta
=X_{1}+X_{2}$ can take negative value with positive probability but $%
X^{\prime }$ $\widetilde{b}=X_{1}$ is almost surely positive. It hence
follows that $\widetilde{b}\notin \Theta $ by (\ref{theta_sign}). Moreover,
for each $s$ in the support of the distribution of $X^{\prime }\widetilde{b}$%
,%
\begin{eqnarray}
&&P(Y=1|X^{\prime }\widetilde{b}=s)  \label{e_1} \\
&=&E\left[ F_{\xi }\left( (1+X_{2}^{2})^{-1/2}\left( s+X_{2}\right) \right)
|X_{1}=s\right]  \label{e_2} \\
&=&\int_{-1}^{1}F_{\xi }\left( (1+u^{2})^{-1/2}\left( s+u\right) \right) du/2
\label{e_3} \\
&\geq &\int_{-1}^{1}F_{\xi }\left( u(1+u^{2})^{-1/2}\right) du/2,
\label{bound}
\end{eqnarray}%
where (\ref{bound}) follows from the fact that $X^{\prime }$ $\widetilde{b}%
=X_{1}\sim U(0,1)$ so that $s\geq 0$. Note that for each $u\in (-1,1)$, $%
u(1+u^{2})^{-1/2}$ also falls within the interval $\left( -1,1\right) $.
Therefore by (\ref{F(t)}), the term on the right hand side of (\ref{bound})
equals%
\begin{equation}
\int_{-1}^{1}\left[ \tau +cu(1+u^{2})^{-1/2}\right] du/2=\tau .  \label{e_4}
\end{equation}%
Hence, $\func{sgn}[X^{\prime }\widetilde{b}]=\func{sgn}[P(Y=1|X^{\prime }%
\widetilde{b})-\tau ]$ almost surely and we have that $\widetilde{b}\in 
\overline{\Theta }$.

\subsection*{Example 2: $\protect\underline{\Theta }$ can be a proper subset
of $\Theta $}

Let $X=(X_{1},X_{2},X_{3})$ be a trivariate vector where $X_{1}\sim U(-1,1)$%
, $X_{2}\sim U(-1,1)$ and 
\begin{equation}
X_{3}\equiv \left\{ 
\begin{array}{c}
\widetilde{X}_{3,1}\text{ if }X_{1}+X_{2}\geq 0 \\ 
\widetilde{X}_{3,2}\text{ if }X_{1}+X_{2}<0%
\end{array}%
\right.
\end{equation}%
where $\widetilde{X}_{3,1}\sim U(1,2)$, $\widetilde{X}_{3,2}\sim U(-2,-1)$
and the random variables $X_{1}$, $X_{2}$, $\widetilde{X}_{3,1}$ and $%
\widetilde{X}_{3,2}$ are independent. Assume that $\beta =(1,1,0)$ and $%
\varepsilon $=$\sqrt{1+X_{2}^{2}}\xi $ where $\xi $ is a random variable
independent of $X$ and has the same distribution function $F_{\xi }$ as
defined by (\ref{F(t)}). Consider the value $\widetilde{b}\equiv (1,0,1)$.
By design, $X^{\prime }\beta $ and $X^{\prime }\widetilde{b}$ have the same
sign almost surely and hence $\widetilde{b}\in \Theta $. Now consider the
vector $\gamma \equiv (1,0,0)$. Since $X^{\prime }\gamma =X_{1}$, by (\ref%
{e_1}) - (\ref{bound}) and the arguments yielding the bound (\ref{e_4}) in
Example 1, it also follows that 
\begin{equation}
P(Y=1|X^{\prime }\gamma =s)\geq \tau \text{ for }s\geq 0\text{.}  \notag
\end{equation}%
Note that the event $\{X^{\prime }\widetilde{b}<0$ and $X_{1}>0\}$ can occur
with positive probability. Therefore we have that $\widetilde{b}\notin 
\underline{\Theta }$.

\subsection{Construction of a confidence set for the true value $\protect%
\beta $ in the monotone single index model\label{confidence set in the
monotone single index model}}

In this section, we briefly discuss how to construct a confidence set for
the true value $\beta $ in the monotone single index model. Let $\Gamma _{X}$
denote the support of the distribution of $X$. Let $v\equiv (s,t,\gamma )$
and $\mathcal{V}\equiv \{(s,t,\gamma ):(s,t)\in \Gamma _{X}\times \Gamma
_{X},\gamma \in \Gamma \}$. Assume the set $\mathcal{V}$ is nonempty and
compact.\ Define 
\begin{eqnarray*}
m_{b}(v) &\equiv &\left( s-t\right) ^{\prime }b\left[ E(Y|X^{\prime
}b=s^{\prime }b,X^{\prime }\gamma =s^{\prime }\gamma )-E(Y|X^{\prime
}b=t^{\prime }b,X^{\prime }\gamma =t^{\prime }\gamma )\right] \\
&&\times f_{b,\gamma }\left( s^{\prime }b,s^{\prime }\gamma \right)
f_{b,\gamma }\left( t^{\prime }b,t^{\prime }\gamma \right) ,
\end{eqnarray*}%
where the function $f_{b,\gamma }$ denotes the joint density function of the
indexing variables $\left( X^{\prime }b,X^{\prime }\gamma \right) $. Note
that for almost every $v\in \mathcal{V}$, 
\begin{align*}
& m_{b}(v)\geq 0 \\
& \Longleftrightarrow \left( s-t\right) ^{\prime }b\left[ E(Y|X^{\prime
}b=s^{\prime }b,X^{\prime }\gamma =s^{\prime }\gamma )-E(Y|X^{\prime
}b=t^{\prime }b,X^{\prime }\gamma =t^{\prime }\gamma )\right] \geq 0.
\end{align*}%
Thus the set $\widetilde{\Theta }$ defined in Section \ref{Monotone single
index model} can be equivalently formulated as the following set 
\begin{equation}
\{b\in \Gamma :m_{b}(v)\geq 0\text{ for almost every }v\in \mathcal{V}\}.
\label{moment inequalities}
\end{equation}%
Assume that we observe a random sample of individual outcomes and covariates 
$\left( Y_{i},X_{i}\right) _{i=1,...,n}$ that are generated from a monotone
single index model defined by (\ref{single index model}). We now construct a
set estimator $\widehat{\Theta }$ at the $\left( 1-\alpha \right) $
confidence level such that%
\begin{equation*}
\liminf_{n\longrightarrow \infty }P(\beta \in \widehat{\Theta })\geq 1-\alpha
\end{equation*}%
by inverting a CLR based test of the conditional moment inequalities in (\ref%
{moment inequalities}). The confidence set construction principle here is
analogous to that described in Section \ref{sec:inference-CLR}. To avoid
repetition, we mainly present the formulae for the relevant components in
the implementation of $\widehat{\Theta }$.

Let 
\begin{equation*}
\widehat{m}_{b}(v)\equiv \left\{ nh_{n}(b)h_{n}(\gamma )\right\} ^{-1}\left(
s-t\right) ^{\prime }b\sum\limits_{i=1}^{n}\left[ Y_{i}K_{n}(X_{i},s,b,%
\gamma )\widehat{f}_{b,\gamma }\left( t\right) -Y_{i}K_{n}(X_{i},t,b,\gamma )%
\widehat{f}_{b,\gamma }\left( s\right) \right] ,
\end{equation*}%
where%
\begin{eqnarray*}
\widehat{f}_{b,\gamma }\left( x\right) &\equiv &\left\{
nh_{n}(b)h_{n}(\gamma )\right\}
^{-1}\sum\limits_{i=1}^{n}K_{n}(X_{i},x,b,\gamma ), \\
K_{n}(X_{i},x,b,\gamma ) &\equiv &K\left( \frac{x^{\prime }b-X_{i}^{\prime }b%
}{h_{n}(b)},\frac{x^{\prime }\gamma -X_{i}^{\prime }\gamma }{h_{n}(\gamma )}%
\right) ,
\end{eqnarray*}%
$K(\cdot ,\cdot )$ is a bivariate kernel function, and $h_{n}(\gamma )$ is a
sequence of bandwidths for each $\gamma $. Define 
\begin{equation*}
T(b)\equiv \inf\nolimits_{v\in \mathcal{V}}\frac{\widehat{m}_{b}(v)}{%
\widehat{\sigma }_{b}(v)},
\end{equation*}%
where 
\begin{eqnarray*}
\widehat{\sigma }_{b}^{2}(v) &\equiv &n^{-2}[h_{n}(b)]^{-2}[h_{n}(\gamma
)]^{-2}\left( \left( s-t\right) ^{\prime }b\right) ^{2}\sum\limits_{i=1}^{n}%
\widehat{\omega }_{i}^{2}(b,v), \\
\widehat{\omega }_{i}(b,v) &\equiv &\widehat{u}_{i}\left( b,\gamma \right) %
\left[ K_{n}(X_{i},s,b,\gamma )\widehat{f}_{b,\gamma }\left( t\right)
-K_{n}(X_{i},t,b,\gamma )\widehat{f}_{b,\gamma }\left( s\right) \right] , \\
\widehat{u}_{i}\left( b,\gamma \right) &\equiv &Y_{i}-\left[
\sum\limits_{j=1}^{n}K_{n}(X_{j},X_{i},b,\gamma )\right] ^{-1}\sum%
\limits_{j=1}^{n}Y_{i}K_{n}(X_{j},X_{i},b,\gamma ).
\end{eqnarray*}%
Let $B$ be the number of simulation repetitions. For each repetition $r\in
\{1,...,B\}$, we draw an $n$ dimensional vector of mutually independently
standard normally distributed random variables which are also independent of
the data. Let $\eta (r)$ denote this vector. For any compact set $\mathsf{V}%
\subseteq \mathcal{V}$, define%
\begin{equation*}
T_{r}^{\ast }(b;\mathsf{V})\equiv \inf\nolimits_{v\in \mathsf{V}}\left[
\left\{ nh_{n}(b)h_{n}(\gamma )\widehat{\sigma }_{b}(v)\right\} ^{-1}\left(
s-t\right) ^{\prime }b\sum\limits_{i=1}^{n}\eta _{i}(r)\widehat{\omega }%
_{i}(b,v)\right] .
\end{equation*}%
Let $\widehat{q}_{\alpha }(b,\mathsf{V})$ be the $\alpha $ level empirical
quantile based on the vector $\left( T_{r}^{\ast }(b;\mathsf{V})\right)
_{r\in \{1,...,B\}}$. Let 
\begin{equation*}
\widehat{V}_{n}(b)\equiv \left\{ v\in \mathcal{V}:\widehat{m}_{b}(v)\leq -2%
\widehat{q}_{\gamma _{n}}(b,\mathcal{V})\widehat{\sigma }_{b}(v)\right\} ,
\end{equation*}%
where $\gamma _{n}\equiv 0.1/\log n$. Then as before, we construct the $%
(1-\alpha )$ confidence set $\widehat{\Theta }$ for the true value $\beta $
in the monotone single index model by setting 
\begin{equation*}
\widehat{\Theta }\equiv \left\{ b\in \Gamma :T(b)\geq \widehat{q}_{\alpha
}(b,\widehat{V}_{n}(b))\right\} .
\end{equation*}

\bibliographystyle{econometrica}
\bibliography{moment_inequality}

\end{document}